\newcommand{\OO}{\mathcal{O}_{\mathbf{s}}}
\newcommand{\smallv}{\mathbf{v}}
\newcommand{\V}{\mathbf{V}}
\newcommand{\x}{\mathbf{x}}
\newcommand{\y}{\mathbf{y}}
\newcommand{\z}{\mathbf{z}}
\newcommand{\Gx}{Y^{\good}}
\newcommand{\GV}{\mathcal{V}^{k,\good}}
\newcommand{\Gz}{Z^{-1}(\z)^{\good}}
\newcommand{\good}{\text{good}}
\newtheorem{theorem}{Theorem}
\newtheorem{lemma}{Lemma}
\begin{document}
\title{Optimal Quantum Algorithm for Vector Interpolation}
\author{\textsc{Sophie Decoppet}\thanks{\textsc{Department of Physics and Department of Computer Science, Stanford University}}}
\date{}

\maketitle

\begin{abstract}
    In this paper we study the functions that can be learned through the polynomial interpolation quantum algorithm designed by Childs et al. This algorithm was initially intended to find the coefficients of a multivariate polynomial function defined on finite fields $\mathbb{F}_q$. We extend its scope to vector inner product functions of the form $\OO(\smallv) = \mathbf{s}\cdot\smallv$ where the goal is to find the vector $\mathbf{s} \in \mathbb{F}_q^n$. We examine the necessary conditions on the domain $\mathcal{V}$ of $\OO$ and prove that the algorithm is optimal for such functions. Furthermore, we show that the success probability approaches 1 for large $q$ and large domain order $|\mathcal{V}|.$ Finally, we provide a conservative formula for the number of queries required to achieve this success probability.
\end{abstract}

\section{Introduction}

The task of learning functions is a widespread area of research in theoretical computer science. Algorithms that interpolate a function have numerous applications in theoretical physics, cryptography, scientific computation, and other related fields. Many functions have efficient classical interpolation algorithms, but quantum computers offer the possibility of a significant speed-up. As a result, a burgeoning area of research in quantum computer science is quantum learning algorithms.

The polynomial interpolation problem was initially solved by Childs et al. \cite{univariate} and subsequently generalized by Chen et al. \cite{multivariate}. The objective of the polynomial interpolation problem is to find the coefficients of a polynomial that we have access to through queries to a black box. The quantum algorithm outputs the desired result with probability close to 1 for large domain size and uses significantly less queries than the classical interpolation algorithm. This problem is of particular relevance in cryptography, because it breaks the Shamir secret sharing scheme. The security of this scheme relies on the inability of a classical adversary to find $f(x_{d+1})$ given $f(x_1), \ldots, f(x_d)$ where $f$ is a univariate polynomial of degree $d$. A quantum adversary would be able to interpolate the entire polynomial with the first $d$ values of $f$ using Childs et al.'s algorithm.

In this paper, we study the polynomial interpolation algorithm presented by Childs et al. \cite{univariate} and examine its scope. We find that the polynomial interpolation algorithm can be generalized to interpolate functions defined on finite fields $\mathcal{V} \subseteq \mathbb{F}_q^n$ of the form 
$$\OO(\smallv) = \mathbf{s} \cdot \smallv$$
The goal is to find $\mathbf{s}$, given access to a black box that will output $\OO(\smallv)$ for any $\smallv \in \mathcal{V}$. While the polynomial interpolation algorithm was initially designed to work for a specific input domain $\mathcal{V}$, we show that it works for any domain $\mathcal{V}$ of $\OO$ provided $\mathcal{V}$ follows certain constraints: the field $\mathcal{V}$ must be of size at least $q$ and constructed such that any distinct $n$ vectors in $\mathcal{V}$ are linearly independent. This means that the polynomial interpolation algorithm can be used to interpolate any function of the form $\OO$ even if the type of input $\smallv$ that can be passed into the black box is restricted. We further show that the success probability of the algorithm is $\frac{|R_k|}{q^n}$ where $R_k$ is the range of a function defined in section \ref{section:Zfunc}, and $k$ is the number of queries to the black box. We prove that the algorithm is optimal for these functions, so no other $k$-query algorithm can succeed in finding $\mathbf{s}$ with greater probability. Finally, we bound $|R_k|$ and calculate the query complexity, offering conservative formulas for the number of queries $k$ needed to find $\mathbf{s}$. With $k = \lceil\frac{n\log q}{\log |\mathcal{V}|q}\rceil$ queries, the success probability of the algorithm is $\frac{1}{k!}(1 - O(\frac{1}{\text{min}(q,|\mathcal{V}|)}))$. To achieve a success probability of $1 -O( \frac{1}{\text{min}(q,|\mathcal{V}|)})$, we use $k = \lceil\frac{\log(|\mathcal{V}|q^n)}{2\log(|\mathcal{V}|/|\mathcal{V}_0|)}\rceil$ when $q > |\mathcal{V}|$ and $k = \lceil\frac{(n+1)\log q}{2\log(|\mathcal{V}|/|\mathcal{V}_0|)}\rceil$ when $q < |\mathcal{V}|$, where $\mathcal{V}_0$ is a subset of $\mathcal{V}$ containing every element of $\mathcal{V}$ that has at least one zero entry.

In section \ref{section:paperprelims}, we review the fundamentals of quantum computing, covering qubits, quantum algorithms, and quantum oracles. We also provide an overview of the current research on quantum algorithms that use the same methods as our interpolation algorithm. In section \ref{section:algorithm}, we formalize the problem of vector interpolation and describe the interpolation algorithm. Next, we compute in section \ref{section:performance} the number of queries necessary to succeed with bounded error as well as with probability close to 1 for large $q$ and large function domain. We use two different methods to find a formula for $k$ in each case, following the same techniques as Childs et al. \cite{univariate}. Finally, we show in section \ref{section:optimality} that no other $k$-query algorithm can succeed with probability greater than our interpolation algorithm, using the same approach as Childs et al. \cite{univariate} and Chen et al. \cite{multivariate}. 

\section{Preliminaries}\label{section:paperprelims}
\subsection{Quantum computing overview}

Information on classical computers is stored at the most fundamental level using bits. Each bit can assume either the value 0 or the value 1. Quantum computers function with the quantum analogue of a classical bit: the qubit. A qubit can assume the value 0 or the value 1, just like a classical bit, but can also be in any superposition of those two states. A qubit's state looks like $\ket{\psi} = \alpha \ket{0} + \beta \ket{1}$ for $\alpha, \beta \in \mathbb{C}$. Instead of working in the $\{\ket{0}, \ket{1}\}$ basis, most quantum algorithms work in a different, more practical, basis such as $\{\ket{x} : x \in \mathbb{F}_q\}$. The ability to store a superposition of states in a single qubit is what differentiates quantum computing from classical computing.

A quantum algorithm is a series of unitary transformations that act on the input state. Instead of working with one qubit at a time, a quantum algorithm often acts on multiple registers, each containing separate qubits. For example, an initial state for a quantum algorithm could look like the following two-register uniform superposition over $\mathbb{F}_q$: 
$$\frac{1}{q}\sum_{x, y \in \mathbb{F}_q}\ket{x, y}$$

Quantum algorithms that estimate or learn functions, such as the one studied in this paper, often supplement unitary transformations with queries to an oracle. An oracle is a black box that supplies limited information about a function. Most commonly, an oracle will take an input value $x$ and output $f(x)$. A quantum algorithm can make either adaptive or non-adaptive queries to the oracle. This means that it can make all the queries to the oracle simultaneously (non-adaptive), or it can make the queries sequentially and adjust queries based on the oracle's response to previous queries (adaptive).

Measurement is inherently probabilistic in quantum mechanics. One of the main objectives when designing a quantum algorithm, therefore, is to optimize the success probability. Efficiency can be measured many different ways, for example, number of gates used, size of registers used, or number of queries made to the oracle. In this paper, we will determine the number of queries to the oracle needed to optimize the success probability of our algorithm.

\subsection{Literature review}
The polynomial interpolation algorithm studied in this paper relies on the method of coset states used in the coset identification problem. In this section we provide an overview of algorithms and research conducted on the coset state method used in polynomial interpolation. A more complete taxonomy of the field of quantum algorithms can be found in Gill et al.'s and Stigsson's works \cite{qataxonomy, cryptotaxonomy}.

The initial algorithm presented by Childs et al. \cite{univariate} finds the coefficients of a degree $d$ polynomial function in one variable defined over a finite field $\mathbb{F}_q$. While a classical algorithm would need $d+1$ queries, Childs et al.'s quantum algorithm determines the coefficients of the polynomial with bounded error using $k = \frac{d}{2} + \frac{1}{2}$ queries. Using $k = \frac{d}{2} + 1$ queries, it finds the coefficients with probability approaching 1 for large $q$. This algorithm was later generalized by Chen et al. \cite{multivariate} to interpolate degree $d$ polynomials in $m$ variables defined over finite fields $\mathbb{F}_q$, the reals $\mathbb{R}$, and the complex numbers $\mathbb{C}$. Again, the algorithm provides a significant improvement on the number of queries required compared to the classical case. Classically, we would need ${m+d \choose d}$ queries, whereas the quantum algorithm requires only $\frac{d}{m+d} {m+d \choose d}$ queries, a much more significant jump than in the univariate case.

The structure of the polynomial interpolation algorithm \cite{univariate, multivariate} is based on the Pretty Good Measurement (PGM, \cite{pgm}) approach used in the Hidden Subgroup Problem (HSP, \cite{hsp}). This approach starts out with a uniform superposition over all possible inputs to the black box (i.e. the entire domain of the function evaluated by the oracle). It then performs non-adaptive queries to the black box in order to create a superposition of coset states. A series of entangled measurements then reveals the hidden subgroup. The difference that the polynomial interpolation algorithm introduces is to start out with a uniform superposition over a carefully chosen subset of the function domain. The final state becomes a superposition of coset states, but over only a subset of the possible outputs of the oracle (i.e. a subset of the range of the function evaluated by the oracle). The hope is that a measurement of these entangled states will give the desired result with higher probability than if we had measured a superposition over the entire function range. 

The coset approach used in the polynomial interpolation algorithm is a special case of a more general method studied by Zhandry \cite{zhandry} and Copeland and Pommersheim \cite{symmetricoracle}. Both examine the coset identification problem \cite{symmetricoracle}, called the quantum oracle classification in \cite{zhandry}, and examine its quantum query complexity. The goal of the coset identification problem is to learn the coset structure of an abelian \cite{zhandry} or arbitrary finite \cite{symmetricoracle} group $G$ by querying a black box. More specifically, we are given a group $G$, a subgroup $H \leq G$, and a black box that evaluates $\pi(g)$ for $\pi$ some unitary transformation and $g \in G$. The objective is to find what coset of $H$ $g$ is an element of. Couching their research in terms of character and representation theory, Copeland and Pommersheim show that the optimal quantum algorithm for the coset identification problem is a parallel non-adaptive algorithm. They further prove that any adaptive algorithm is equivalent to a non-adaptive one. Building on Zhandry's work, they provide a general formula for the optimal success probability of a $k$-query quantum algorithm for the coset identification problem. Copeland and Pommersheim's analysis of the coset identification problem reduces to the coset method in the case of polynomial interpolation, where $G$ is the group of polynomial functions \cite{univariate, multivariate}.

The coset identification paradigm can be used to estimate the quantum query complexity of many different learning problems. It was recently applied by Childs et al. \cite{matrixchilds} to the problem of matrix learning. They show that finding the trace of an $n\times n$ matrix with entries in $\mathbb{F}_q$, with probability greater than $\frac{1}{q}$, requires at least $\frac{n}{2}$ queries. Both Zhandry and Copeland and Pommersheim also show how the coset identification paradigm can be applied to the group summation problem (where we want to evaluate the sum $f(1) + \ldots + f(k)$) and the problem of learning a boolean function \cite{boolean}. 

In the next section, we will present the interpolation algorithm and formally define the types of functions it can be used for.  

\section{The interpolation algorithm}\label{section:algorithm}
\subsection{Presentation of the problem}\label{section:presentation}

Let $\mathbb{F}_q$ be a finite field, with $q = p^r$ a power of a prime $p$. The task of learning a polynomial $f$ defined over a finite field $\mathbb{F}_q$ has been studied by Childs et al. \cite{univariate, multivariate}. The concept is straightforward. We have a polynomial $f \in \mathbb{F}_q[x]$ of degree $d$, and we have access to this polynomial through a black box that evaluates $f(x)$ for any input $x \in \mathbb{F}_q$. The objective is to find the coefficients $c_0, \ldots, c_d$. The polynomial interpolation algorithm presented by Childs et al. determines the coefficients with probability $1 - O(\frac{1}{q})$. 

In this paper, we would like to find for which other functions $f$ the polynomial interpolation algorithm \cite{univariate} can be used. More precisely, we focus on vector functions of the form
\begin{align}
    &\OO : \mathcal{V} \to \mathbb{F}_q \\
    &\OO(\smallv) = \mathbf{s} \cdot \smallv
\end{align}
where $\mathcal{V}\subseteq \mathbb{F}_q^n$ is some field of $n$-dimensional vectors. Given a black box that will evaluate $\OO$ on any input $\smallv \in \mathcal{V}$, we would like to use Childs et al.'s interpolation algorithm to find $\mathbf{s} \in \mathbb{F}_q^n$. The objective of this paper is to find for which domain field $\mathcal{V}$ this is possible. We impose a simple condition as described in the theorem below:

\begin{theorem}\label{theorem:maintheorem}
    Let $\mathcal{V} \subseteq \mathbb{F}_q^n$ be a field of size at least $q$ of $n$-dimensional vectors such that any $n$ arbitrarily chosen vectors in $\mathcal{V}$ are linearly independent. Then there exists a $k$-query quantum algorithm for interpolating $\mathbf{s}$ with probability $\frac{|R_k|}{q^n}$, where $R_k \subseteq \mathcal{V}^k \times \mathbb{F}_q^k$ is the range of the function $Z = \sum_{i = 1}^k y_i\smallv_i$ (see section \ref{section:Zfunc}).
\end{theorem}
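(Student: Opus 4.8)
The plan is to prove the theorem constructively, by instantiating the coset-state / Pretty Good Measurement template of Childs et al.\ \cite{univariate} (as generalized by Chen et al.\ \cite{multivariate}) for the inner-product oracle $\OO$; the matching optimality statement carried out in section~\ref{section:optimality} then says no $k$-query algorithm beats this, but that is outside the present claim.

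\textbf{The algorithm.} I would have it prepare the uniform superposition over the good domain of $Z$ (the vector tuples $(\smallv_1,\dots,\smallv_k)$ and coefficient tuples $(y_1,\dots,y_k)$ singled out in section~\ref{section:Zfunc}), adjoin $k$ auxiliary $\mathbb{F}_q$-registers, and then make $k$ \emph{non-adaptive phase queries}: for the $i$-th, transfer $y_i$ into the auxiliary register, Fourier-transform over $\mathbb{F}_q$, call the oracle once on $\smallv_i$ so the register collects the phase $\chi\!\big(y_i\,\OO(\smallv_i)\big)$ --- with $\chi(a)=e^{2\pi i\,\operatorname{Tr}(a)/p}$ the canonical additive character of $\mathbb{F}_q=\mathbb{F}_{p^r}$ --- then undo the Fourier transform and clean the auxiliary register. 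This consumes exactly $k$ oracle calls and only ever queries $\OO$ at points of $\mathcal{V}$, which is essential: the hypotheses force $\mathcal{V}$ not to be a subspace, so one cannot simply feed the oracle a combination such as $\sum_i y_i\smallv_i$.

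\textbf{Collapsing the $\mathbf{s}$-dependence.} Because $\OO$ is $\mathbb{F}_q$-linear, $\sum_i y_i\,\OO(\smallv_i)=\mathbf{s}\cdot\!\big(\sum_i y_i\smallv_i\big)=\mathbf{s}\cdot Z(\smallv,\y)$, so the post-query state is
\[
\ket{\psi_{\mathbf{s}}}=\frac{1}{\sqrt N}\sum_{(\smallv,\y)}\chi\!\big(\mathbf{s}\cdot Z(\smallv,\y)\big)\ket{\smallv,\y}=\sum_{\z\in R_k}a_{\z}\,\chi(\mathbf{s}\cdot\z)\,\ket{\hat\phi_{\z}},
\]
where $\ket{\hat\phi_{\z}}$ is the normalized uniform superposition over the good preimage $\Gz$ (mutually orthogonal for distinct $\z$) and $a_{\z}=\sqrt{\lvert\Gz\rvert/N}$. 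Here I would invoke the counting lemma of section~\ref{section:Zfunc}, which uses the assumption that any $n$ vectors of $\mathcal{V}$ are linearly independent (together with $\lvert\mathcal{V}\rvert\ge q$), to conclude that $\lvert\Gz\rvert$ is the \emph{same} for every $\z\in R_k$, hence $a_{\z}=\lvert R_k\rvert^{-1/2}$ uniformly. I expect this uniformity to be the crux of the whole argument: it is what turns the Cauchy--Schwarz step below into an equality (rather than an inequality $\le\lvert R_k\rvert/q^n$), and it is the one place where the structural hypotheses on $\mathcal{V}$ --- and the precise design of the ``good'' truncation --- are genuinely needed.

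\textbf{The measurement and success probability.} I would measure $\ket{\psi_{\mathbf{s}}}$ with the Pretty Good Measurement for the uniform ensemble $\{(q^{-n},\ket{\psi_{\mathbf{s}}})\}_{\mathbf{s}\in\mathbb{F}_q^n}$. Orthogonality of the additive characters of $\mathbb{F}_q^n$ (namely $\sum_{\mathbf{s}}\chi(\mathbf{s}\cdot\mathbf{w})=q^n[\mathbf{w}=0]$) makes the average state $\rho=q^{-n}\sum_{\mathbf{s}}\ket{\psi_{\mathbf{s}}}\!\bra{\psi_{\mathbf{s}}}=\lvert R_k\rvert^{-1}\sum_{\z\in R_k}\ket{\hat\phi_{\z}}\!\bra{\hat\phi_{\z}}$ diagonal in the $\ket{\hat\phi_{\z}}$ basis, so $\rho^{-1/2}\ket{\psi_{\mathbf{s}}}=\sum_{\z\in R_k}\chi(\mathbf{s}\cdot\z)\ket{\hat\phi_{\z}}$ has squared norm $\lvert R_k\rvert$; the PGM elements are thus $E_{\mathbf{s}}=\tfrac{\lvert R_k\rvert}{q^n}\ket{\mu_{\mathbf{s}}}\!\bra{\mu_{\mathbf{s}}}$ with $\ket{\mu_{\mathbf{s}}}=\lvert R_k\rvert^{-1/2}\sum_{\z\in R_k}\chi(\mathbf{s}\cdot\z)\ket{\hat\phi_{\z}}$, and a second use of character orthogonality shows $\sum_{\mathbf{s}}E_{\mathbf{s}}$ equals the projector onto $\operatorname{span}\{\ket{\hat\phi_{\z}}:\z\in R_k\}$, confirming a legitimate POVM once its complement is appended. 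Finally the success probability is $\bra{\psi_{\mathbf{s}}}E_{\mathbf{s}}\ket{\psi_{\mathbf{s}}}=\tfrac{\lvert R_k\rvert}{q^n}\,\lvert\braket{\mu_{\mathbf{s}}}{\psi_{\mathbf{s}}}\rvert^2=\tfrac{\lvert R_k\rvert}{q^n}\big(\lvert R_k\rvert^{-1/2}\textstyle\sum_{\z}a_{\z}\big)^2=\tfrac{\lvert R_k\rvert}{q^n}$, using $a_{\z}\equiv\lvert R_k\rvert^{-1/2}$. This is independent of $\mathbf{s}$, which establishes the claimed bound.
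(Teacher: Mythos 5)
Your measurement analysis (the PGM step) is essentially the paper's Fourier-basis measurement in disguise, and the phase-query bookkeeping is fine; the gap is in the state preparation and the uniformity claim it leans on. You start from the uniform superposition over the \emph{good} domain of $Z$ and then assert, citing a ``counting lemma of section \ref{section:Zfunc},'' that $\lvert\Gz\rvert$ is the same for every $\z\in R_k$, so that $a_{\z}=\lvert R_k\rvert^{-1/2}$ uniformly. No such lemma exists at that point in the paper, and the claim is not available for the theorem as stated. Theorem \ref{theorem:maintheorem} is for arbitrary $k$; the ``$\lvert\Gz\rvert\in\{0,k!\}$'' counting argument appears only in the proof of Lemma \ref{lemma:lowsuccessprobability} and crucially uses $2k\le n$, which follows from the specific choice $k=\lceil n\log q/\log(\lvert\mathcal{V}\rvert q)\rceil$ made there, not from the hypotheses of the theorem. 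Moreover, even when that counting applies, it only says each fiber is empty or of size $k!$: some $\z\in R_k$ may have \emph{only} non-good preimages (this is exactly why the paper settles for the lower bound $\lvert R_k\rvert\ge\frac{1}{k!}\sum_{\z}\lvert\Gz\rvert$). With your initial state, such $\z$ receive zero amplitude, and more generally your success probability is $\frac{1}{q^n}\bigl(\sum_{\z}a_{\z}\bigr)^2$, which by Cauchy--Schwarz is at most $\lvert\{\z:\Gz\ne\emptyset\}\rvert/q^n\le\lvert R_k\rvert/q^n$, with equality only under the uniformity and surjectivity you have not established. So as written your construction proves a weaker statement than the theorem claims.

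The paper sidesteps all of this by choosing the initial superposition over $T_k$, a transversal containing exactly one preimage of each $\z\in R_k$ (so $\lvert T_k\rvert=\lvert R_k\rvert$). After the $k$ parallel phase queries and the in-place computation $\ket{\V,\y}\mapsto\ket{Z(\V,\y)}$, the state is exactly $\frac{1}{\sqrt{\lvert R_k\rvert}}\sum_{\z\in R_k}e(\mathbf{s}\cdot\z)\ket{\z}$, and the Fourier-basis measurement (equivalently, your PGM) succeeds with probability exactly $\lvert R_k\rvert/q^n$ --- no uniformity of fiber sizes is needed, and in fact the linear-independence and $\lvert\mathcal{V}\rvert\ge q$ hypotheses play no role in this part of the argument (they are used only later, in Lemmas \ref{lemma:lowsuccessprobability} and \ref{lemma:highperformance}, to make $\lvert R_k\rvert/q^n$ large). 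Replacing your ``good-domain'' initial state by such a transversal, and dropping the uniformity claim, repairs the proof and brings it in line with the paper's.
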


Polynomial interpolation is a special case of the problem presented above, where we set 
\begin{equation}
    \mathcal{V} = \{(1, x, x^2, \ldots, x^d) : x \in \mathbb{F}_q\}
\end{equation} and $n= d+ 1$, with $d$ the degree of the polynomial. Sections \ref{section:univariate} and \ref{section:multivariate} compare the results found in this paper with the results of \cite{univariate} and \cite{multivariate}.

\subsection{Definitions and notation}
Before presenting the interpolation algorithm, we will start by defining some functions and sets used. First, we recall the behavior of the exponential function $e : \mathbb{F}_q \to \mathbb{C}$ defined on a finite field:
\begin{equation}
    e(z) = e^{2\pi i \text{Tr}(z)}
\end{equation}
where Tr is the trace function 
\begin{equation}
    \text{Tr}(z) = z + z^p + z^{p^2} + \ldots + z^{p^{r-1}}
\end{equation}
Next, we define the quantum Fourier transform over $\mathbb{F}_q$. For any $x \in \mathbb{F}_q$, the quantum Fourier transform of $\ket{x}$ is 
\begin{equation}
    \ket{x} \mapsto \frac{1}{\sqrt{q}}\sum_{y \in \mathbb{F}_q} e(xy)\ket{y}
\end{equation}
Generalizing to any $\x \in \mathbb{F}_q^k$, we define the quantum Fourier transform for $k$-dimensional vectors as follows: 
\begin{equation}
    \ket{\x} \mapsto \frac{1}{\sqrt{q^k}}\sum_{\mathbf{y} \in \mathbb{F}_q^k} e(\x\cdot \mathbf{y})\ket{\mathbf{y}}
\end{equation}

\subsubsection{The $Z$ function}\label{section:Zfunc}
The function $Z$ defined below is used in the interpolation algorithm to store ``trial" values of $\mathbf{s}\cdot \smallv$ for different values of $\mathbf{s}$ and $\smallv$. It is used to create a non-uniform superposition of possible inputs $\smallv \in \mathcal{V}$, as detailed below.

Before defining $Z$, we set forth the following notation. We index a vector $\smallv_i \in \mathcal{V}$ as follows:
\begin{equation}
    \smallv_i = (v_{i,1},v_{i,2},\ldots,v_{i,n})
\end{equation}
This allows us to define the $nk$-dimensional vector $\V$ as $k$ $n$-dimensional vectors:
\begin{align}
    \V &= (\smallv_1, \smallv_2, \ldots, \smallv_k)= (v_{1,1}, v_{1,2}, \ldots, v_{2,1}, v_{2,2}, \ldots, v_{k,1},  v_{k,2}, \ldots, v_{k, n})
\end{align}
Finally, we define the function $Z$:
\begin{align}\label{eq:Z_function}
    &Z : \mathcal{V}^k \times \mathbb{F}_q^k \to \mathbb{F}_q^n\\
    &Z(\V, \y) = \sum_{i = 1}^k y_i\smallv_i
\end{align}

The interpolation algorithm sets up a uniform superposition of vectors, querying the black box $k$ times. A traditional approach would create this superposition over all possible $\smallv$, but our interpolation algorithm starts with a uniform superposition over the pre-image of $Z$ instead. More precisely, it starts with a uniform superposition over the pre-image set $T_k$ and finishes with a uniform superposition over the image set $R_k$, where $T_k$ and $R_k$ are the sets defined below:

\begin{equation}
    R_k = \{Z(\V, \y) : (\V, \y) \in \mathcal{V}^k \times \mathbb{F}_q^k\} \subseteq \mathbb{F}_q^n
\end{equation}
\begin{equation}
    T_k = \{\text{unique set of pre-images of }\z \in R_k\} \subseteq \mathcal{V}^k \times \mathbb{F}_q^k
\end{equation}

\subsection{Phase query}
The interpolation algorithm accesses the function $\OO$ through queries to a black box. On any input $\ket{\smallv, y}$ for $\smallv \in \mathcal{V}$ and $y \in \mathbb{F}_q$, the black box will perform the following operation: 
\begin{equation}
    \ket{\smallv, y} \mapsto \ket{\smallv, y + \OO(\smallv)}
\end{equation}
The black box query is useful to evaluate $\OO$, but instead of storing the value of $\OO$ in the state itself, we want to store it inside the phase of the original state $\ket{\smallv, y}$ in the following way:
\begin{equation}
    \ket{\smallv, y} \mapsto e(y\OO(\smallv))\ket{\smallv,y}
\end{equation}
This allows us to leverage the quantum Fourier transform and to measure the final state in the Fourier basis.

A phase query (query by which we store $\OO$ in the phase) can easily be done by performing an inverse Fourier transform over $\mathbb{F}_q$ on the second query register, querying the black box, then performing a Fourier transform on the second query register. Taking $\smallv \in \mathcal{V}$ and $y \in \mathbb{F}_q$, a phase query performs the following operations:
\begin{align}
    \ket{\smallv, y} &\mapsto \frac{1}{\sqrt{q}}\sum_{\alpha \in \mathbb{F}_q} e(-y\alpha)\ket{\smallv,\alpha} & \text{Inverse Fourier transform}\\
    &\mapsto  \frac{1}{\sqrt{q}}\sum_{\alpha \in \mathbb{F}_q} e(-y\alpha)\ket{\smallv,\alpha + \OO(\smallv)} & \text{Query black box}\\
    &\mapsto  \frac{1}{q}\sum_{\alpha, \beta \in \mathbb{F}_q} e(-y\alpha + \beta(\alpha + \OO(\smallv)))\ket{\smallv,\beta} & \text{Fourier transform}\\
    &= \frac{1}{q}\sum_{\alpha \in \mathbb{F}_q} q\delta_{\alpha,-\OO(\smallv)}\;\;e(-y\alpha)\ket{\smallv,\beta} & \label{eq:fourier1}\\
    &= e(y\OO(\smallv))\ket{\smallv,y}\label{eq:fourier2}
\end{align}
where we used in lines (\ref{eq:fourier1}) and (\ref{eq:fourier2}) the fact that $\sum_{z \in \mathbb{F}_q}e(z(x-y)) = q\delta_{x,y}$. 

Notice how only one standard black box query is performed during a phase query, so computing the query complexity is the same whether we count phase queries or standard queries. 

The interpolation algorithm doesn't perform a single phase query, but rather performs $k$ phase queries in parallel, each on a different register. Generalizing the action of a single phase query, $k$ phase queries in parallel perform the following operation:
\begin{align}
    \ket{\V, \y} &\mapsto e\left(\sum_{i = 1}^k y_i\OO(\smallv_i)\right) \ket{\V,\y} = e\left(\sum_{i = 1}^k y_i(\mathbf{s}\cdot \smallv_i)\right) \ket{\V,\y} \\
    & = e\left(\mathbf{s} \cdot Z(\V, \y) \right) \ket{\V,\y}
\end{align}

\subsection{Algorithm}\label{subsection:algorithm}
The interpolation algorithm has three main steps. We start with a uniform superposition over $T_k$, the set of pre-images of the $Z$ function:
\begin{equation}\label{eq:initialsuperposition}
    \frac{1}{\sqrt{|T_k|}} \sum_{(\V, \y) \in T_k} \ket{\V, \y}
\end{equation}
Next, we perform $k$ phase queries in parallel, which gives us the state 
\begin{equation}
    \frac{1}{\sqrt{|T_k|}} \sum_{(\V, \y) \in T_k} e(\mathbf{s} \cdot Z(\V, \y))\ket{\V, \y}
\end{equation}
Finally, we compute $Z$ in place, performing the uniform transformation $\ket{\V, \y} \to \ket{Z(\V, \y)}$. The final state $\ket{\sigma_{R_k}}$ below is an approximate state which resembles the Fourier transform $\ket{\sigma}$ of $\ket{\mathbf{s}}$
\begin{align}
    \ket{\sigma_{R_k}} &= \frac{1}{\sqrt{|R_k|}} \sum_{\z \in R_k} e(\mathbf{s} \cdot \z)\ket{\z} \label{eq:finalstate}\\
    \ket{\sigma} &= \frac{1}{\sqrt{q^n}} \sum_{\z\in \mathbb{F}_q^n} e(\mathbf{s} \cdot \z) \ket{\z}
\end{align}
Instead of outputing $\ket{\sigma}$, the interpolation algorithm gives us $\ket{\sigma_{R_k}}$. Measuring $\ket{\sigma_{R_k}}$ in the Fourier basis will give us $\mathbf{s}$ with probability
\begin{align}
    |\braket{\sigma_{R_k}}{\sigma}|^2 &= \frac{1}{|R_k|q^n}\left(\sum_{\mathbf{z'} \in R_k}\sum_{\z \in \mathbb{F}_q^n} e(\mathbf{s} \cdot (\z - \z'))\braket{\z'}{\z}\right)^2 \\
    &= \frac{1}{|R_k|q^n} \left(\sum_{\mathbf{z'} \in R_k}\sum_{\z \in \mathbb{F}_q^n} \delta_{\z, \z'}\right)^2 = \frac{|R_k|}{q^n}
\end{align}
In the next section, we will find a lower bound on $|R_k|$ in order to find a lower bound on the success probability of this algorithm.

\section{Performance and query complexity}\label{section:performance}
The lower bound on $|R_k|$ depends on the number of queries $k$ performed. In this section we prove the following theorem: 
\begin{theorem}
    The success probability of the interpolation algorithm described in section \ref{subsection:algorithm} is 
    \begin{enumerate}[label=(\roman*)]
        \item $\frac{1}{k!}\left(1-O\left(\frac{1}{\min(q, |\mathcal{V}|)}\right)\right)$ using $k = \left\lceil\frac{n\log q}{\log |\mathcal{V}|q}\right\rceil$ queries
        
        \item $1 -O\left( \frac{1}{\min(q, |\mathcal{V}|)}\right)$ using 
        \begin{enumerate}
            \item $k = \left\lceil\frac{\log(|\mathcal{V}|q^n)}{2\log(|\mathcal{V}|/|\mathcal{V}_0|)}\right\rceil$ queries when $q > |\mathcal{V}|$ 
            \item $k = \left\lceil\frac{(n+1)\log q}{2\log(|\mathcal{V}|/|\mathcal{V}_0|)}\right\rceil$ queries when $q < |\mathcal{V}|$
        \end{enumerate} where $\mathcal{V}_0 = \{\smallv \in \mathcal{V} : \exists i \text{ s.t. } v_i = 0\}$
    \end{enumerate}
\end{theorem}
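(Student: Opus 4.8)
\noindent\emph{Proof proposal.}\quad Since the success probability has already been identified exactly as $|R_k|/q^n$, the whole theorem reduces to two lower bounds on $|R_k|$: a coarse one, used at the smaller prescribed $k$, showing $|R_k|\gtrsim q^n/k!$, and a sharp one, used at the larger prescribed $k$, showing $|R_k|=q^n\bigl(1-O(1/\min(q,|\mathcal{V}|))\bigr)$. Both will use only the two hypotheses: $|\mathcal{V}|\ge q$, and any $n$ distinct vectors of $\mathcal{V}$ are linearly independent (so $\mathbf 0\notin\mathcal{V}$, and any $j\le n$ distinct vectors are independent). A useful preliminary remark is that $R_k=\bigcup_{S\subseteq\mathcal{V},\,|S|\le k}\mathrm{span}(S)$: for distinct $\smallv_1,\dots,\smallv_j$ with $j\le\min(k,n)$, independence makes $\y\mapsto\sum_i y_i\smallv_i$ a bijection onto $\mathrm{span}(\smallv_1,\dots,\smallv_j)$, and padding with zero coefficients covers $j<k$.

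\emph{Part (i).} Here I would exploit near-disjointness of these spans. When $2k\le n$, two full-support combinations $\sum_{i=1}^k y_i\smallv_i=\sum_{i=1}^k y_i'\smallv_i'$ (all $y_i,y_i'\ne0$) with $\{\smallv_i\}\ne\{\smallv_i'\}$ would put more than $k$ but at most $2k\le n$ distinct vectors of $\mathcal{V}$ into a $k$-dimensional subspace, which is impossible; hence $|R_k|\ge\binom{|\mathcal{V}|}{k}(q-1)^k$. Using $\binom{|\mathcal{V}|}{k}=\frac{|\mathcal{V}|^k}{k!}\bigl(1-O(k^2/|\mathcal{V}|)\bigr)$ and $(q-1)^k=q^k\bigl(1-O(k/q)\bigr)$ gives $|R_k|\ge\frac{(|\mathcal{V}|q)^k}{k!}\bigl(1-O(k^2/\min(q,|\mathcal{V}|))\bigr)$, and since $k=\lceil n\log q/\log(|\mathcal{V}|q)\rceil$ forces $(|\mathcal{V}|q)^k\ge q^n$, we get $|R_k|/q^n\ge\frac{1}{k!}\bigl(1-O(1/\min(q,|\mathcal{V}|))\bigr)$. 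The regime $2k>n$, which one checks only occurs when $q>|\mathcal{V}|$, needs a short separate argument: if $k\ge n$ then any $n$ distinct vectors of $\mathcal{V}$ form a basis, so $R_k=\mathbb{F}_q^n$ and the probability is $1$; the sliver $n/2<k<n$ I would handle by running the disjointness argument on $n$-element configurations or by $|R_k|\ge|R_{k'}|$ for a suitable $k'\le n/2$.

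\emph{Part (ii).} Here the plan is to pass to the character side. Writing $M(\z):=|Z^{-1}(\z)|$ and expanding the indicator of $\sum_i y_i\smallv_i=\z$ in additive characters,
\[
M(\z)=\frac{q^k}{q^n}\sum_{\mathbf t\in\mathbb{F}_q^n}e(-\mathbf t\cdot\z)\,N(\mathbf t)^k,\qquad N(\mathbf t):=\bigl|\{\smallv\in\mathcal{V}:\mathbf t\cdot\smallv=0\}\bigr|,
\]
so that $\sum_\z M(\z)=q^k|\mathcal{V}|^k$ and $\sum_\z M(\z)^2=\tfrac{q^{2k}}{q^n}\sum_{\mathbf t}N(\mathbf t)^{2k}$ (using $N(-\mathbf t)=N(\mathbf t)$). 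Cauchy--Schwarz then yields
\[
\frac{|R_k|}{q^n}=\frac{\#\{\z:M(\z)>0\}}{q^n}\ \ge\ \frac{1}{1+|\mathcal{V}|^{-2k}\sum_{\mathbf t\ne0}N(\mathbf t)^{2k}},
\]
and it remains to bound the error term by $O(1/\min(q,|\mathcal{V}|))$ at the prescribed $k$. The inputs are the first moment $\sum_{\mathbf t\ne0}N(\mathbf t)=|\mathcal{V}|(q^{n-1}-1)$ and the second moment $\sum_{\mathbf t\ne0}N(\mathbf t)^2=|\mathcal{V}|q^{n-1}+|\mathcal{V}|(|\mathcal{V}|-1)q^{n-2}-|\mathcal{V}|^2$ (both from counting incidences $(\mathbf t,\smallv)$ and using independence of distinct pairs), together with a bound on $N(\mathbf t)$: a hyperplane contains at most $n-1$ independent vectors, so $N(\mathbf t)\le n-1$, and --- more delicately --- the $\mathbf t$ with $N(\mathbf t)$ close to the maximum are controlled by the coordinate hyperplanes, which is what lets $|\mathcal{V}_0|$ take the place of the crude $n-1$ as the base of the geometric decay. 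Feeding a bound of the form $\sum_{\mathbf t\ne0}N(\mathbf t)^{2k}\lesssim(\text{effective max})^{2k-2}\cdot\sum_{\mathbf t\ne0}N(\mathbf t)^2$ with effective max of order $|\mathcal{V}_0|$ into the inequality above, the error term falls below $1/\min(q,|\mathcal{V}|)$ exactly once $2k\log(|\mathcal{V}|/|\mathcal{V}_0|)$ exceeds $\log(|\mathcal{V}|q^n)$ when $q>|\mathcal{V}|$ (where the $q^{n-1}$ term of the second moment dominates) and $(n+1)\log q$ when $q<|\mathcal{V}|$ (where the $|\mathcal{V}|^2q^{n-2}$ term dominates) --- precisely the choices in (ii)(a) and (ii)(b). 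What remains is routine asymptotic cleanup: simplifying the binomials, the $(q-1)^k$ and $q^{n-1}-1$ factors, and the ceilings into the stated $\frac{1}{k!}(1-O(\cdot))$ and $1-O(\cdot)$ forms.

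I expect the main obstacle to be exactly the sharp estimate of $\sum_{\mathbf t\ne0}N(\mathbf t)^{2k}$ in part (ii) --- equivalently, understanding which $\z\in\mathbb{F}_q^n$ fail to lie in $R_k$ --- and in particular establishing that the decay is governed by $|\mathcal{V}|/|\mathcal{V}_0|$ rather than the easy $|\mathcal{V}|/(n-1)$: this is where both hypotheses on $\mathcal{V}$ must be combined, through incidence bounds between $\mathcal{V}$ and hyperplanes, and it is also the source of the $q>|\mathcal{V}|$ versus $q<|\mathcal{V}|$ case split. By comparison part (i) and all the asymptotic bookkeeping should be straightforward once the picture of $R_k$ as a near-disjoint union of subspaces is in place; the only minor loose end there is the range $2k>n$, dispatched by the basis observation.
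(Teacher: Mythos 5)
Your part (i) is essentially the paper's own argument in different clothing: restricting to tuples of distinct vectors with nonzero coefficients, showing (via the linear-independence hypothesis, using $2k\le n$) that distinct such configurations give distinct values, hence $|R_k|\ge\binom{|\mathcal V|}{k}(q-1)^k$, which is exactly the paper's count of good fibers of size $k!$; the asymptotic cleanup and the borderline case where $2k$ slightly exceeds $n$ are treated no more carefully in the paper than by you, so no complaint there. One phrasing quibble: equality of two full-support combinations does not literally place more than $k$ vectors of $\mathcal V$ inside a $k$-dimensional subspace; the correct (and clearly intended) conclusion is a nontrivial linear dependence among at most $2k\le n$ distinct members of $\mathcal V$, contradicting the hypothesis, which is precisely the paper's equation (\ref{eq:independence}).

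The genuine gap is in part (ii). Your framework coincides with the paper's: your Cauchy--Schwarz bound $\#\{\z: M(\z)>0\}\ge (\sum_\z M(\z))^2/\sum_\z M(\z)^2$ is the paper's Chebyshev step, and your identity $\sum_\z M(\z)^2=\frac{q^{2k}}{q^n}\sum_{\mathbf t}N(\mathbf t)^{2k}$ is exactly line (\ref{eq:V0sumestimate}). But the step that actually produces the advertised query counts --- bounding $\sum_{\mathbf t\ne\mathbf 0}N(\mathbf t)^{2k}$ by roughly $q^{n}|\mathcal V_0|^{2k}$ --- is the one you leave open, labelling it ``the main obstacle'' and offering only the heuristic that near-maximal $N(\mathbf t)$ are ``controlled by the coordinate hyperplanes'' so that an ``effective max'' of order $|\mathcal V_0|$ can be fed into $\sum N^{2k}\le(\text{max})^{2k-2}\sum N^2$. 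Nothing in the proposal substantiates that effective max: the only pointwise bound you actually establish is $N(\mathbf t)\le n-1$, and running your own computation with it yields a query count whose denominator involves $\log\bigl(|\mathcal V|/(n-1)\bigr)$ rather than the stated $\log\bigl(|\mathcal V|/|\mathcal V_0|\bigr)$; the first and second moments of $N$ that you compute are correct but cannot by themselves convert the exponent $2k$ into the stated formulas. The paper closes this step in one line, arguing that for $\mathbf t\ne\mathbf 0$ the solutions of $\mathbf t\cdot\smallv=0$ lie in $\mathcal V_0$, i.e.\ $\sum_{\smallv\in\mathcal V}\delta_{\mathbf t\cdot\smallv,0}\le|\mathcal V_0|$, after which Chebyshev with the two prescribed values of $k$ gives $\Pr[|Z^{-1}(\z)|=0]\le q^n(|\mathcal V_0|/|\mathcal V|)^{2k}\le O\bigl(1/\min(q,|\mathcal V|)\bigr)$. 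You are right that this is where both hypotheses on $\mathcal V$ must do their work, but as written your proposal stops exactly at that point: without either the pointwise bound or a worked-out substitute, part (ii) does not reach the stated query complexities, so the proof is incomplete at its decisive step.
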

Notice that the inverse $k!$ dependence in the first case means that as the size $n$ of the vector $\mathbf{s}$ we want to interpolate increases, the success probability of the algorithm decreases rapidly. Increasing the number of queries $k$ (second case), however, greatly improves the success probability and eliminates the $k$ dependence. 

\subsection{Query complexity for success probability of $\frac{1}{k!}\left(1-O\left(\frac{1}{\text{min}(q, |\mathcal{V}|)}\right)\right)$}\label{section:lowperformance}

In this section, we will prove the following lemma. 
\begin{lemma}\label{lemma:lowsuccessprobability}
    Using $k = \left\lceil\frac{n\log q}{\log |\mathcal{V}|q}\right\rceil$ queries, the success probability of the interpolation algorithm is $\frac{1}{k!}\left(1-O\left(\frac{1}{\min(q, |\mathcal{V}|)}\right)\right)$.
\end{lemma}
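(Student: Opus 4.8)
The goal is to show that with $k = \lceil n\log q / \log(|\mathcal{V}|q)\rceil$ queries, the success probability $|R_k|/q^n$ equals $\frac{1}{k!}(1 - O(1/\min(q,|\mathcal{V}|)))$. Since $R_k$ is the image of $Z$ and $T_k$ is a transversal of its fibers, we have $|R_k| = |T_k|$, so it suffices to count $|T_k|$ accurately, or equivalently to count $|R_k|$ directly. The natural strategy is a two-sided estimate on $|R_k|$: an upper bound $|R_k| \le q^n$ trivially, but more usefully we want to show $|R_k|$ is close to $\frac{1}{k!}q^n$ (so that the chosen $k$ makes this $\Theta(q^n/k!)$), and that the error terms are of relative order $1/\min(q,|\mathcal{V}|)$.

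My plan would be: first, understand the fibers of $Z$. Given a "generic" target $\z \in R_k$, how many pairs $(\V,\y)$ map to it? The map sends $(\smallv_1,\dots,\smallv_k,y_1,\dots,y_k) \mapsto \sum y_i \smallv_i$. Using the hypothesis that any $n$ vectors of $\mathcal{V}$ are linearly independent (so in particular any $k \le n$ of them are), for a fixed choice of distinct $\smallv_1,\dots,\smallv_k$ with all $y_i \ne 0$, the vector $\sum y_i\smallv_i$ determines the $y_i$ uniquely once the $\smallv_i$ are fixed (linear independence), but permuting the $k$ pairs $(\smallv_i,y_i)$ gives the same sum — hence the factor $k!$. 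So the "main term" count is: number of ways to pick an unordered set of $k$ pairs $\{(\smallv_i,y_i)\}$ with distinct $\smallv_i \in \mathcal{V}$ and nonzero $y_i$, such that the resulting sums are all distinct, which is roughly $\frac{1}{k!}\binom{|\mathcal{V}|}{k}(q-1)^k \cdot (\text{correction for collisions})$... wait, that overcounts $R_k$. Let me reconsider: $|R_k|$ is the number of distinct sums. The number of ordered tuples $(\V,\y)$ with distinct $\smallv_i$ and nonzero $y_i$ is $|\mathcal{V}|(|\mathcal{V}|-1)\cdots(|\mathcal{V}|-k+1)(q-1)^k$, and generically each sum $\z$ has exactly $k!$ preimages among these (the $k!$ orderings), so $|R_k| \approx \frac{1}{k!}|\mathcal{V}|^{\underline{k}}(q-1)^k$. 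Then one checks: (a) tuples with a repeated $\smallv_i$ or a zero $y_i$ are a negligible fraction — this is where $\mathcal{V}_0$ and the $|\mathcal{V}|$ in the denominator enter; (b) "accidental collisions" where two genuinely different unordered tuples give the same sum are rare — this needs a counting argument bounding the number of solutions to $\sum y_i\smallv_i = \sum y_i'\smallv_i'$, again using linear independence of few vectors, contributing the $1/q$ error. Finally plug in $k$: with $k = \lceil n\log q/\log(|\mathcal{V}|q)\rceil$ we have $(|\mathcal{V}|q)^k \approx q^n$ roughly, i.e. $|\mathcal{V}|^k (q-1)^k \approx q^n$, so $|R_k|/q^n \approx \frac{1}{k!}$.

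Hmm, wait — I should double check the direction. We need $|R_k| \le q^n$ always, and the formula $\frac{1}{k!}|\mathcal{V}|^{\underline k}(q-1)^k$ should be $\le q^n$; the choice of $k$ is the smallest making $(|\mathcal{V}|q)^k \ge q^n$ (ceiling), so $(|\mathcal{V}|q)^{k} \ge q^n$ but could overshoot, which would make the naive main term exceed $q^n$ — so actually the main term must be $\min(\frac{1}{k!}|\mathcal{V}|^{\underline k}(q-1)^k, \text{something})$, OR the saturation is handled differently. Let me think again... Actually I suspect the point is: $R_k \subseteq \mathbb{F}_q^n$, and when $(|\mathcal{V}|q)^k$ is comparable to $q^n$ (within a factor from the ceiling), the map $Z$ is "almost injective up to $S_k$ symmetry" onto a set whose size is the smaller of $q^n$ and $\frac{1}{k!}|\mathcal{V}|^{\underline k}(q-1)^k$ — and the ceiling is chosen precisely so both sides are within $O(1)$ multiplicative factors, giving $|R_k|/q^n = \frac{1}{k!}(1 - o(1))$ when things work out. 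I'd need to track the ceiling carefully; that's part (the annoying bookkeeping part) of the proof.

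\begin{proof}[Proof plan]
Since $T_k$ is by construction a transversal for the fibers of $Z$, we have $|R_k| = |T_k|$, and by the algorithm's analysis the success probability is $|R_k|/q^n$; so the task is to estimate $|R_k|$. The plan is a careful count of the image of the map
$Z(\V,\y)=\sum_{i=1}^k y_i\smallv_i$, organized around the idea that $Z$ is injective up to the obvious $S_k$-symmetry (simultaneously permuting the pairs $(\smallv_i,y_i)$) once we restrict to ``nondegenerate'' tuples.

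\textbf{Step 1: the main term.} Call a tuple $(\V,\y)\in\mathcal{V}^k\times\mathbb{F}_q^k$ \emph{nondegenerate} if the $\smallv_1,\dots,\smallv_k$ are pairwise distinct and every $y_i\neq 0$. Since $k\le n$, the hypothesis that any $n$ vectors of $\mathcal{V}$ are linearly independent implies the $\smallv_i$ in a nondegenerate tuple are linearly independent; hence if $(\V,\y)$ and $(\V',\y')$ are nondegenerate with the same image $\z$, then $\{\smallv_1,\dots,\smallv_k\}=\{\smallv_1',\dots,\smallv_k'\}$ (the support of $\z$ in terms of a linearly independent set is forced) and the $y$-coordinates match up under the induced permutation. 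Thus each $\z$ in the nondegenerate image has exactly $k!$ nondegenerate preimages, so the number of such $\z$ is
$\tfrac{1}{k!}\,|\mathcal{V}|(|\mathcal{V}|-1)\cdots(|\mathcal{V}|-k+1)\,(q-1)^k$,
capped of course by $q^n$. I would then show that with $k=\lceil n\log q/\log(|\mathcal{V}|q)\rceil$ this equals $\tfrac{1}{k!}q^n(1-O(1/\min(q,|\mathcal{V}|)))$: the falling factorial loses a factor $1-O(k^2/|\mathcal{V}|)$ from $|\mathcal{V}|^k$ and $(q-1)^k$ loses $1-O(k/q)$ from $q^k$, while the ceiling in $k$ is what makes $(|\mathcal{V}|q)^k$ land within the right window of $q^n$; here I must be careful that $k$ stays small (say $k=O(n)$, in fact $k=\Theta(n/\log_q|\mathcal{V}|q)$) so that $k^2/|\mathcal{V}|$ and $k/q$ are absorbed into $O(1/\min(q,|\mathcal{V}|))$, possibly after noting $k$ grows slower than $\min(q,|\mathcal{V}|)$ in the regime of interest.

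\textbf{Step 2: degenerate tuples are negligible, and accidental collisions are rare.} The image points coming \emph{only} from degenerate tuples (a repeated $\smallv_i$ or some $y_i=0$) number at most the count of degenerate tuples, which is $O\!\big(k^2|\mathcal{V}|^{k-1}q^k + k\,|\mathcal{V}|^k q^{k-1}\big)$ — a $1/\min(q,|\mathcal{V}|)$ fraction of the main term — so they can only \emph{increase} $|R_k|$ by a negligible amount and cannot make it smaller than the nondegenerate count. For the lower bound I also need that two \emph{distinct} $S_k$-orbits of nondegenerate tuples rarely collide: the number of solutions of $\sum y_i\smallv_i=\sum y_i'\smallv_i'$ with $\{\smallv_i\}\neq\{\smallv_i'\}$ is controlled by picking the (at most $2k\le 2n$, hence linearly independent) distinct vectors involved and observing the $y$'s are then overdetermined, giving a collision count that is an $O(1/q)$ fraction of the main term. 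Combining Steps 1 and 2 yields $|R_k|=\tfrac{1}{k!}q^n\big(1-O(1/\min(q,|\mathcal{V}|))\big)$, hence the claimed success probability.

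\textbf{Main obstacle.} The delicate point is the bookkeeping around the ceiling in $k$: I must simultaneously ensure that (i) $(|\mathcal{V}|q)^k$ is at least $q^n$ so the nondegenerate image would ``want'' to be all of $q^n$ and gets truncated to exactly the right size, and (ii) the overshoot from the ceiling, together with the $1-O(k^2/|\mathcal{V}|)$ and $1-O(k/q)$ corrections, all collapse into a single clean $O(1/\min(q,|\mathcal{V}|))$ relative error rather than something like $O(k^2/|\mathcal{V}|)$. This requires a mild a priori bound on $k$ (it is $O(n)$, and in the interesting asymptotic regime $n=o(\min(q,|\mathcal{V}|))$), which I would state explicitly. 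Everything else is routine falling-factorial and geometric-series estimation following the template of Childs et al.\ \cite{univariate}.
\end{proof}
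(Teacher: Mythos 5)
Your overall route coincides with the paper's (restrict to ``good'' tuples with pairwise distinct $\smallv_i$ and nonzero $y_i$, exploit the $S_k$ symmetry of $Z$, count orbits, then plug in $k$), but there is a genuine gap at the central injectivity step, and it sits exactly where the hypotheses of Theorem \ref{theorem:maintheorem} have to be invoked. In your Step 1 you claim that two nondegenerate tuples with the same image must have the same support because ``the support of $\z$ in terms of a linearly independent set is forced''; this does not follow from $k\le n$. Knowing that each tuple's own $k$ vectors are linearly independent does not preclude $\z$ from having representations over two different $k$-subsets of $\mathcal{V}$ — for that you need the \emph{union} of the two supports, up to $2k$ distinct vectors, to be linearly independent. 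Your Step 2 implicitly concedes this by allowing ``accidental collisions'' between distinct orbits, but justifies their rarity with ``at most $2k\le 2n$, hence linearly independent,'' which misapplies the hypothesis: it only guarantees independence of any $n$ vectors of $\mathcal{V}$. The missing ingredient is the assumption $|\mathcal{V}|\ge q$, which gives $\log(|\mathcal{V}|q)\ge 2\log q$ and hence $k=\lceil n\log q/\log(|\mathcal{V}|q)\rceil\le\lceil n/2\rceil$, so a collision relation $\sum_{i}(y_i-y_i')\smallv_i+\sum_{j}y_j\smallv_j-\sum_{j}y_j'\mathbf{w}_j=0$ involves at most $2k\le n$ distinct vectors of $\mathcal{V}$. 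With that in hand the hypothesis applies directly, cross-orbit collisions are \emph{impossible}, and every nonempty good fiber has exactly $k!$ elements (the paper's dichotomy $|\Gz|\in\{0,k!\}$); no collision-counting estimate is needed at all. Without it, your Step 2 bound has no justification as stated and would have to be replaced by something like the second-moment argument of Section \ref{section:highperformance}, which produces a different, $\mathcal{V}_0$-dependent error term rather than the claimed $O(1/q)$ fraction.

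The ceiling/truncation worry you flag is a non-issue for the bound actually required. The number of good orbits is exactly $\frac{1}{k!}\left|\GV\right|\left|\Gx\right|=\frac{1}{k!}(|\mathcal{V}|q)^k\left(1-O\!\left(\frac{1}{\min(q,|\mathcal{V}|)}\right)\right)$ and is automatically at most $q^n$ (distinct orbits have distinct images), while the ceiling gives $(|\mathcal{V}|q)^k\ge q^n$, which only strengthens the lower bound $|R_k|/q^n\ge\frac{1}{k!}\left(1-O\!\left(\frac{1}{\min(q,|\mathcal{V}|)}\right)\right)$; no capping argument is needed. Your observation that the error factors are really $1-O(k^2/|\mathcal{V}|)$ and $1-O(k/q)$, so that one implicitly needs $k$ small compared with $\min(q,|\mathcal{V}|)$, is fair, but the paper treats this the same way, so it is a matter of added care rather than a different approach.
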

\begin{proof}
Our first approach to finding a lower bound on $|R_k|$ is to consider a more restricted range for $Z$, which will map to a subset of $R_k$ and allow us to effectively lower bound $|R_k|$. Instead of considering all of $\mathcal{V}^k \times \mathbb{F}_q^k$, we consider instead $\mathcal{V}^{k,\good} \times Y^{\good}$, where
\begin{equation}
    \GV = \{\V \in \mathcal{V}^k : \smallv_i \neq \smallv_j, \; \forall i \neq j, \}
\end{equation}
\begin{equation}
     \Gx = (\mathbb{F}_q^{\times})^k
\end{equation}
Let $\Gz$ to be the set of pre-image of $\z$ that are contained in $\GV \times \Gx$ such that
\begin{equation}
    \Gz = Z^{-1}(\z) \cap (\GV \times \Gx)
\end{equation}
where 
\begin{equation}
    Z^{-1}(\z) = \{(\V, \y) \in \mathcal{V}^k \times \mathbb{F}_q^k : Z(\V, \y) = \z\}
\end{equation}
is the set of pre-images of any $\z \in \mathbb{F}_q^n$. 

The set $R_k$ is the set of images of $Z$. $|R_k|$ corresponds to the number of distinct images of $Z$, or in other words, the number values of $\z$ such that $Z^{-1}(\z)$ is non-empty. We claim that either $\z$ has no good pre-images, so $|Z^{-1}(\z)^{\good}| = 0$, or that its good pre-images are unique up to permutation of the ``inner" vectors $\smallv_i$, so $|Z^{-1}(\z)^{\good}| = k!$.

To see why this is true, assume that there exist two pairs $(\V, \y)$ and $(\mathbf{W}, \mathbf{y}')$ in $\GV \times \Gx$ such that $Z(\V, \y) = Z(\mathbf{W}, \y')$. Permute the indices of the $k$ inner vectors $\smallv_i$ such that $\smallv_i = \mathbf{w}_i$ for $i \in \{1, \ldots, \ell\}$ and $\smallv_i \neq \mathbf{w}_i$ for $i \in \{\ell + 1, \ldots, k\}$. Now, since $Z(\V, \y) = Z(\mathbf{W}, \y')$, we have that
\begin{equation}
    \sum_{i=1}^k y_i\smallv_i - \sum_{i=1}^k y_i'\mathbf{w}_i = 0
\end{equation}
\begin{equation}\label{eq:independence}
    \sum_{i = 1}^{\ell} (y_i - y_i')\smallv_i + \sum_{i = \ell+1}^k y_i\smallv_i - \sum_{i = \ell+1}^k y_i'\mathbf{w}_i = 0
\end{equation}
There are at most $2k = 2\left\lceil\frac{n\log q}{\log |\mathcal{V}|q}\right\rceil \leq 2\left\lceil\frac{n\log q}{\log q^2}\right\rceil\leq n$ terms in this sum. By assumption, any $n$ distinct vectors in $\mathcal{V}$ are linearly independent. Since the $\smallv_i$ and $\mathbf{w}_i$ in (\ref{eq:independence}) are all distinct, they are linearly independent. So, $y_i = y_i'$ for $i \in \{1, \ldots, \ell\}$ and $y_i = y_i' = 0$ for $i \in \{\ell+1, \ldots, k\}$. We know, however, that $y_i, y_i' \neq 0$ since they are both in $\mathbb{F}_q^{\times}$, so $\ell = k$. If there exist two $(\V, \y), (\mathbf{W}, \y') \in \GV \times \Gx$ such that $Z(\V, \y) = Z(\mathbf{W}, \y')$, then $\V = \mathbf{W}$ up to permutation of the inner vectors $\smallv_i$. In other words, if $\z$ has a good pre-image $(\V,\y) \in \GV \times \Gx$, then it is possible to find all the other good pre-images in $\GV \times \Gx$ by permuting the inner vectors $\smallv_i$ of $\V$. It follows, therefore, that either $|Z^{-1}(\z)^{\good}| = 0$ or $|Z^{-1}(\z)^{\good}| = k!$.

As we saw above, $|R_k|$ is the number of values of $\z$ for which $Z^{-1}(\z)$ is non-empty. Since any non-empty set $Z^{-1}(\z)^{\good}$ has size $k!$, this means that $|R_k|$ is at least equal to
\begin{equation}\label{eq:rkbound}
    |R_k| \geq \frac{1}{k!}\sum_{\z \in \mathbb{F}_q^n} |\Gz|
\end{equation}
To estimate the sum in equation (\ref{eq:rkbound}), notice that 
\begin{equation}\label{eq:Gzsum}
    \sum_{\z \in \mathbb{F}_q^n} \left|\Gz\right| = \left|\GV \right|\left|\Gx\right| 
\end{equation}
We need to find estimates for $\left|\GV \right|$ and $\left|\Gx \right|$. This is straightforward:
\begin{equation}
    \left|\GV \right| = k!{|\mathcal{V}| \choose k} = \frac{|\mathcal{V}|!}{(|\mathcal{V}| -k )!} = |\mathcal{V}|^k\left(1 - O\left(\frac{1}{|\mathcal{V}|}\right)\right)
\end{equation} 
\begin{equation}
    \left|\Gx \right| = (q-1)^k = q^k\left(1 - O\left(\frac{1}{q}\right)\right)
\end{equation}

Plugging these estimates into equation (\ref{eq:Gzsum}), we get:
\begin{align}\label{eq:goodvaluesmean}
    \sum_{\z \in \mathbb{F}_q^n} \left|\Gz\right| &= \left|\GV \right|\left|\Gx\right|  \\
    &= (|\mathcal{V}|q)^k\left(1 - O\left(\frac{1}{|\mathcal{V}|}\right)\right)\left(1 - O\left(\frac{1}{q}\right)\right)\\
    &= (|\mathcal{V}|q)^k\left(1 - O\left(\frac{1}{\text{min}(q,|\mathcal{V}|)}\right)\right) \label{eq:zsum}
\end{align}
Now, recall that we set $k = \left\lceil\frac{n\log q}{\log |\mathcal{V}|q}\right\rceil$, so $(|\mathcal{V}|q)^k = q^n$. Plugging this and equation (\ref{eq:zsum}) into equation (\ref{eq:rkbound}), we get: 
\begin{align}
    |R_k| &\geq\frac{q^n}{k!}\left(1 - O\left(\frac{1}{\text{min}(q,|\mathcal{V}|)}\right)\right)
\end{align}
It follows that the probability of success of the algorithm is, as desired,
\begin{align}
    \frac{|R_k|}{q^n} &\geq \frac{1}{k!}\left(1 - O\left(\frac{1}{\text{min}(q,|\mathcal{V}|)}\right)\right)
\end{align}
\end{proof}

\subsection{Query complexity for success probability of $1-O\left(\frac{1}{\text{min}(q, |\mathcal{V}|)}\right)$}\label{section:highperformance}

In this section, we will prove the following lemma. 
\begin{lemma}\label{lemma:highperformance}
    The success probability of the interpolation algorithm is $\left(1-O\left(\frac{1}{\min(q, |\mathcal{V}|)}\right)\right)$ when we use
    \begin{enumerate}[label=(\roman*)]
        \item $k = \left\lceil\frac{\log(|\mathcal{V}|q^n)}{2\log(|\mathcal{V}|/|\mathcal{V}_0|)}\right\rceil$ when $q > |\mathcal{V}|$ \\
        \item $k = \left\lceil\frac{(n+1)\log q}{2\log(|\mathcal{V}|/|\mathcal{V}_0|)}\right\rceil$ when $q < |\mathcal{V}|$
    \end{enumerate}
    where $|\mathcal{V}_0|$ is the number of vectors in $\mathcal{V}$ that have at least one zero entry.
\end{lemma}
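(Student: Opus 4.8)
The success probability equals $|R_k|/q^n$, so—exactly as in Lemma~\ref{lemma:lowsuccessprobability}—everything reduces to lower bounding $|R_k|$, but now sharply enough that $|\mathbb{F}_q^n\setminus R_k| = O\!\bigl(q^n/\min(q,|\mathcal V|)\bigr)$ for the two stated values of $k$. The plan is to push the counting of the previous section further. It is convenient to view $R_k$ as the union $\bigcup_S \mathrm{span}(S)$ over the $\binom{|\mathcal V|}{k}$ subsets $S\subseteq\mathcal V$ of size $k$ (assume $k<n$, else $R_k=\mathbb{F}_q^n$ already). The hypothesis that any $n$ vectors of $\mathcal V$ are independent makes each $\mathrm{span}(S)$ genuinely $k$-dimensional and, crucially, fixes every pairwise intersection: $\dim\bigl(\mathrm{span}(S)\cap\mathrm{span}(S')\bigr)=\max\bigl(|S\cap S'|,\,2k-n\bigr)$. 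For the stated $k$ the total volume $\binom{|\mathcal V|}{k}q^{k}$ comfortably exceeds $q^n$, so these subspaces can—and essentially do—cover $\mathbb{F}_q^n$; the task is to show that the uncovered set is tiny.

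To bound the uncovered set I would estimate, for a hypothetical $\z\notin R_k$, the number of $k$-subsets $S$ with $\z\in\mathrm{span}(S)$—zero by assumption—by tracking the linear relations among vectors of $\mathcal V$ that its non-membership forces, in the spirit of the collision analysis in Lemma~\ref{lemma:lowsuccessprobability}. The linear-independence hypothesis disposes of all ``generic'' relations; the place where it gives out—and where $|\mathcal V_0|$ enters—is when the relevant vectors are confined to a single coordinate hyperplane $\{x_i=0\}$, of which $\mathcal V$ contains exactly $|\mathcal V_0^{(i)}|\le|\mathcal V_0|$. Carrying this through, the extremal degenerate configurations are governed by $|\mathcal V_0|$, and one arrives at an error bound of the shape $1-|R_k|/q^n\le c\,q^{n}\bigl(|\mathcal V_0|/|\mathcal V|\bigr)^{2k}$.

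The two displayed choices of $k$ are precisely the ones that drive $q^{n}(|\mathcal V_0|/|\mathcal V|)^{2k}$ down to $1/\min(q,|\mathcal V|)$: when $q>|\mathcal V|$, solving $(|\mathcal V|/|\mathcal V_0|)^{2k}=|\mathcal V|q^n$ yields the numerator $\log(|\mathcal V|q^n)$ and the bound $1/|\mathcal V|$, and when $q<|\mathcal V|$, solving $(|\mathcal V|/|\mathcal V_0|)^{2k}=q^{n+1}$ yields the numerator $(n+1)\log q$ and the bound $1/q$; the split just records which of $q$ and $|\mathcal V|$ is the smaller normalizer. The hard part is the uncovered-set estimate itself: because the subspaces $\mathrm{span}(S)$ overlap heavily, a crude inclusion–exclusion or second-moment bound is far too lossy—the pairwise-intersection term can dwarf the main term—so one really needs the exact intersection dimensions above, together with a careful isolation of the coordinate-hyperplane configurations that produce the factor $|\mathcal V_0|$ instead of the cruder bound $n-1$ on the number of vectors of $\mathcal V$ in a hyperplane. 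This is the linear-algebraic analogue of the polynomial-factorization phenomenon Childs et al.\ exploit in the univariate case.
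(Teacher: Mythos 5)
Your reduction (success probability $=|R_k|/q^n$) and your endgame (calibrating $k$ so that $q^n\bigl(|\mathcal{V}_0|/|\mathcal{V}|\bigr)^{2k}$ drops below $1/\min(q,|\mathcal{V}|)$, with the two cases deciding whether the bound reads $1/|\mathcal{V}|$ or $1/q$) both match the paper. But the heart of the lemma --- the inequality $1-|R_k|/q^n\le c\,q^n\bigl(|\mathcal{V}_0|/|\mathcal{V}|\bigr)^{2k}$ --- is asserted, not derived. The mechanism you offer does not work as described: for $\z\notin R_k$, the condition $\z\notin\mathrm{span}(S)$ is a non-membership statement and forces no linear relation among elements of $\mathcal{V}$, so there is nothing to ``track.'' The independence hypothesis does give your intersection formula $\dim\bigl(\mathrm{span}(S)\cap\mathrm{span}(S')\bigr)=\max\bigl(|S\cap S'|,2k-n\bigr)$ for $k\le n$, but knowing these dimensions by itself yields no estimate of how many points the union $\bigcup_S\mathrm{span}(S)$ misses, and you explicitly decline to carry out the inclusion--exclusion or averaging computation that would convert this structural information into a count. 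In particular, the step where $|\mathcal{V}_0|$ enters --- which is the entire quantitative content of the lemma --- is left as ``carrying this through,'' i.e.\ unproven (and the exponent $2k$ is simply read off from the target formula for $k$ rather than produced by any argument).

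Your stated reason for rejecting a second-moment bound is also misplaced: the paper's proof \emph{is} a second-moment (Chebyshev) argument, applied not to the union of subspaces but to the fiber size $|Z^{-1}(\z)|$ for uniform $\z\in\mathbb{F}_q^n$. With $\mu=(|\mathcal{V}|q)^k/q^n$, expanding $\sum_{\z}|Z^{-1}(\z)|^2$ in additive characters of $\mathbb{F}_q^n$ gives exactly $\frac{(|\mathcal{V}|q)^{2k}}{q^n}+\frac{q^{2k}}{q^n}\sum_{\mathbf{t}\ne\mathbf{0}}\bigl(\sum_{\smallv\in\mathcal{V}}\delta_{\mathbf{t}\cdot\smallv,0}\bigr)^{2k}$ as in equation (\ref{eq:V0sumestimate}); the main term cancels against $\mu^2$ in the variance, so there is no ``pairwise-intersection term dwarfing the main term.'' What remains is to bound the hyperplane-section count $\sum_{\smallv\in\mathcal{V}}\delta_{\mathbf{t}\cdot\smallv,0}$ for $\mathbf{t}\ne\mathbf{0}$ (the paper bounds it by $|\mathcal{V}_0|$; the independence hypothesis alone would give $n-1$), which immediately yields $\Pr[|Z^{-1}(\z)|=0]\le\sigma^2/\mu^2\le q^n\bigl(|\mathcal{V}_0|/|\mathcal{V}|\bigr)^{2k}$ --- precisely the bound you needed and could not reach by your route. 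As it stands, your proposal assumes the key estimate rather than proving it, so the proof is incomplete.
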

\begin{proof}
Recall from Section \ref{section:lowperformance} that $|R_k|$ is the number values of $\z \in \mathbb{F}_q^n$ for which $|Z^{-1}(\z)| \neq 0$. Therefore, under uniform distribution of $\z \in \mathbb{F}_q^n$, we can estimate the success probability as
\begin{equation}\label{eq:secondmoment1}
    \frac{|R_k|}{q^n} = 1 - \text{Pr}[|Z^{-1}(\z)| = 0]
\end{equation}
Our goal is to lower bound $|R_k|$, which requires finding an upper bound for $\text{Pr}[|Z^{-1}(\z)| = 0]$. To do so, we will use the Chebyshev inequality, so that 
\begin{equation}
    \text{Pr}[|Z^{-1}(\z)| = 0] \leq \frac{\sigma^2}{\mu^2}
\end{equation}
The mean of $|Z^{-1}(\z)|$ is 
\begin{align}
    \mu &= \frac{1}{q^{n}} \sum_{\z \in \mathbb{F}_q^n} |Z^{-1}(\z)| = \frac{1}{q^{n}} |\mathcal{V}^k||\mathbb{F}_q^k| = \frac{(|\mathcal{V}|q)^k}{q^{n}}
\end{align}
To calculate the variance we first need to calculate the second moment. \\Recall that $\delta_{x, y} = \begin{cases} 1 & \text{if } x=y \\ 0 & \text{otherwise}\end{cases}$
\begingroup
\allowdisplaybreaks
\begin{align}
    \sum_{\z\in \mathbb{F}_q^n} |Z^{-1}(\z)|^2 &= 
    \sum_{\substack{\V, \mathbf{W} \in \mathcal{V}^k \\ \y, \y' \in \mathbb{F}_q^k}} \delta_{Z(\V, \y), Z(\mathbf{W}, \y')} \\
    &= \sum_{\substack{\V, \mathbf{W} \in \mathcal{V}^k \\ \y, \y' \in \mathbb{F}_q^k}}\frac{1}{q^n}\sum_{\mathbf{t}\in \mathbb{F}_q^n} e(\mathbf{t}\cdot (Z(\V, \y) - Z(\mathbf{W}, \y'))) \\
    &= \frac{1}{q^n}\sum_{\substack{\V, \mathbf{W} \in \mathcal{V}^k \\ \y, \y' \in \mathbb{F}_q^k}} e(\mathbf{0}\cdot (Z(\V, \y) - Z(\mathbf{W}, \y')))  \notag \\
    &\quad\quad\quad + \sum_{\substack{\V, \mathbf{W} \in \mathcal{V}^k \\ \y, \y' \in \mathbb{F}_q^k}}\frac{1}{q^n}\sum_{\mathbf{t}\in \mathbb{F}_q^n \backslash \mathbf{0}} e(\mathbf{t}\cdot (Z(\V, \y) - Z(\mathbf{W}, \y')))\\
    &= \frac{(|\mathcal{V}|q)^{2k}}{q^n} + \frac{1}{q^n}\sum_{\mathbf{t}\in \mathbb{F}_q^n \backslash \mathbf{0}} \sum_{\substack{\V, \mathbf{W} \in \mathcal{V}^k \\ \y, \y' \in \mathbb{F}_q^k}} e\left(\mathbf{t}\cdot\sum_{i=1}^k (y_i\mathbf{v_i} - y_i'\mathbf{w_i})\right)\\
    &= \frac{(|\mathcal{V}|q)^{2k}}{q^n} + \frac{1}{q^n}\sum_{\mathbf{t}\in \mathbb{F}_q^n \backslash \mathbf{0}} \sum_{\substack{\V, \mathbf{W} \in \mathcal{V}^k \\ \y, \y' \in \mathbb{F}_q^k}} \prod_{i=1}^ke\left( \mathbf{t} \cdot (y_i\mathbf{v_i} - y_i'\mathbf{w_i})\right)\\
    &= \frac{(|\mathcal{V}|q)^{2k}}{q^n} + \frac{1}{q^n}\sum_{\mathbf{t}\in \mathbb{F}_q^n \backslash \mathbf{0}} \left(\sum_{\substack{\smallv \in \mathcal{V} \\ y\in \mathbb{F}_q}} e\left(y(\mathbf{t} \cdot \smallv)\right)\right)^{2k} \\
    &=\frac{(|\mathcal{V}|q)^{2k}}{q^n} + \frac{1}{q^n}\sum_{\mathbf{t}\in \mathbb{F}_q^n \backslash \mathbf{0}} \left(q\sum_{\smallv \in \mathcal{V}} \delta_{\mathbf{t}\cdot\smallv,0}\right)^{2k}\\
    &= \frac{(|\mathcal{V}|q)^{2k}}{q^n} + \frac{q^{2k}}{q^n}\sum_{\mathbf{t}\in \mathbb{F}_q^n \backslash \mathbf{0}} \left(\sum_{\smallv \in \mathcal{V}} \delta_{\mathbf{t}\cdot\smallv,0}\right)^{2k}\label{eq:V0sumestimate}
\end{align}
The next step is to upper bound the second term in equation (\ref{eq:V0sumestimate}). Notice that $\mathbf{t}\cdot \smallv=0$ if and only if for every $i \in \{1, \ldots, n\}$, one or both of $t_i$ and $v_i$ are equal to zero. Let $\mathcal{V}_0 = \{\smallv \in \mathcal{V} : \exists i \text{ s.t. } v_i = 0\}$, the subset of $\mathcal{V}$ containing all the elements of $\mathcal{V}$ that have at least one entry equal to 0. We can upper bound the second term in the following way: 
\begin{align}\label{eq:highperformance2}
    \frac{q^{2k}}{q^n}\sum_{\mathbf{t}\in \mathbb{F}_q^n \backslash \mathbf{0}} \left(\sum_{\smallv \in \mathcal{V}} \delta_{\mathbf{t}\cdot\smallv,0}\right)^{2k} 
    &\leq \frac{q^{2k}}{q^n}\sum_{\mathbf{t}\in \mathbb{F}_q^{n} \backslash \mathbf{0}} |\mathcal{V}_0|^{2k}= \frac{(q^n -1)(|\mathcal{V}_0|q)^{2k}}{q^n} \\
    &\leq (|\mathcal{V}_0|q)^{2k}
\end{align}%
\endgroup
Thus the variance is
\begin{align}
    \sigma^2 &= \frac{1}{q^n} \sum_{\z\in \mathbb{F}_q^n} |Z^{-1}(\z)|^2 - \mu^2 \\
    &\leq \frac{(|\mathcal{V}|q)^{2k}}{q^{2n}} + \frac{(|\mathcal{V}_0|q)^{2k}}{q^n} - \frac{(|\mathcal{V}|q)^{2k}}{q^{2n}}  = \frac{(|\mathcal{V}_0|q)^{2k}}{q^n} 
\end{align}
Applying Chebyshev's inequality, we find: 
\begin{align}
    \text{Pr}[|Z^{-1}(\z)| = 0 ] &\leq \frac{\sigma^2}{\mu^2} \leq \frac{(|\mathcal{V}_0|q)^{2k}}{q^n} \frac{q^{2n}}{(|\mathcal{V}|q)^{2k}} = q^n\left(\frac{|\mathcal{V}_0|}{|\mathcal{V}|}\right)^{2k}
\end{align}
Now, we look at the two cases where $q < |\mathcal{V}|$ and $q > |\mathcal{V}|$.
\begin{enumerate}[label=(\roman*)]
    \item \textbf{Case $q < |\mathcal{V}|$}\\
In this case, we set 
\begin{equation}\label{eq:ksmallq}
    k = \left\lceil\frac{(n+1)\log q}{2\log(|\mathcal{V}|/|\mathcal{V}_0|)}\right\rceil
\end{equation}
for which $\text{Pr}[|Z^{-1}(\z)| = 0] \leq O\left(\frac{1}{q}\right)$.
    \item \textbf{Case $|\mathcal{V}| < q$}\\
In this case, we set 
\begin{equation}\label{eq:ksmallV}
    k = \left\lceil\frac{\log(|\mathcal{V}|q^n)}{2\log(|\mathcal{V}|/|\mathcal{V}_0|)}\right\rceil
\end{equation}
for which $\text{Pr}[|Z^{-1}(\z)| = 0] \leq O\left(\frac{1}{|\mathcal{V}|}\right)$.
\end{enumerate}
In each case, we get 
\begin{equation}
    \text{Pr}[|Z^{-1}(\z)| = 0] \leq O\left(\frac{1}{\text{min}(q, |\mathcal{V}|)}\right)
\end{equation}
Therefore, the success probability of the algorithm is, as desired,
\begin{align}
    \frac{|R_k|}{q^n} = 1 - \text{Pr}[|Z^{-1}(\z)| = 0] \geq 1 - O\left(\frac{1}{\text{min}(q, |\mathcal{V}|)}\right)
\end{align}
\end{proof}
The formulas (\ref{eq:ksmallq}) and (\ref{eq:ksmallV}) are exact when the term inside the sum in line (\ref{eq:V0sumestimate}) is constant for all $\mathbf{t} \in \mathbb{F}_q^n$, as we will see in the example of univariate polynomial interpolation in section \ref{section:univariate}. If, on the other hand, the term inside the sum is not constant for all $\mathbf{t} \in \mathbb{F}_q^n$, then the values of $k$ given in equations (\ref{eq:ksmallq}) and (\ref{eq:ksmallV}) are conservative. The limit value of $k$ such that the success probability is $1 - O(\frac{1}{\text{min}(q,|\mathcal{V}|)})$ could be smaller, and must be found either by directly evaluating the sum in equation (\ref{eq:V0sumestimate})(although this might prove to be difficult), or by another method entirely, such as using algebraic geometry as in the multivariate polynomial case \cite{multivariate}.

\section{Special cases of vector interpolation}
\subsection{Univariate polynomial interpolation}\label{section:univariate}
The interpolation algorithm presented in section \ref{subsection:algorithm} was initially created by Childs et al.\cite{univariate} to find the coefficients of a univariate polynomial $f \in \mathbb{F}_q[x]$. Indeed, polynomial interpolation is a special case of vector interpolation. To see why this is the case, we will find the appropriate vector $\ket{s}$ and field $\mathcal{V}$. 

Let $n-1$ be the degree of the polynomial, so $f(x) = c_0 + c_1x + \ldots + c_{n-1}x^{n-1}$. Set $\\{\ket{s} = \ket{c} = (c_0, \ldots, c_{n-1})}$ to be the coefficient vector and let $\mathcal{V} = \{(1, x, x^2, \ldots, x^{n-1}) \; : \; x \in \mathbb{F}_q\}$. Then we have 
\begin{align}
    \OO(\smallv) &= \mathbf{s} \cdot \smallv = (c_0, c_1, \ldots, c_{n-1}) \cdot (1, x, \ldots, x^{n-1}) \\
    &= c_0 + c_1x + \ldots + c_{n-1}x^{n-1} = f(x)
\end{align}
This field $\mathcal{V}$ fulfills the condition set forth in theorem \ref{theorem:maintheorem}. Indeed, consider the Vandermonde matrix, where each column is an arbitrary element of $\mathcal{V}$ and each of the $n$ columns are distinct:
\begin{equation}\label{eq:vandermondematrix}
    \text{Vand}_n = \begin{pmatrix} 
1 & 1 & \ldots & 1 \\
x_1 & x_2 & \ldots & x_{n} \\
x_1^2 & x_2^2 & \ldots & x_{n}^2 \\
\vdots & \vdots & \ddots & \vdots \\
x_1^{n-1} & x_2^{n-1} & \ldots & x_{n}^{n-1}
\end{pmatrix}
\end{equation}
This matrix is invertible, which means that the $n$ columns are linearly independent, as wanted.

Childs et al. determined in \cite{univariate} that a value of $k = \frac{d}{2} + \frac{1}{2}$ for $d$ odd would achieve a success probability of $\frac{1}{k!}(1 - O(\frac{1}{q}))$ and a value of $k = \frac{d}{2} + 1$ for $d$ even would achieve a success probability of $1 - O(\frac{1}{q})$. We will check these values with the values of $k$ found above in section \ref{section:lowperformance} and section \ref{section:highperformance}.

First, we claim that $|\mathcal{V}| = q$ and $|\mathcal{V}_0| = 1$. Indeed, there is a one-to-one correspondence between $\smallv \in \mathcal{V}$ and $\mathbb{F}_q$ where $(1, x, \ldots, x^{n-1})\longleftrightarrow x$. Therefore, $|\mathcal{V}| = |\mathbb{F}_q| = q$. To see that $|\mathcal{V}_0| = 1$, recall that $\mathcal{V}_0$ is the set of vectors in $\mathcal{V}$ that have at least one entry equal to 0. If any entry $v_i = x^{i-1}$ of $\smallv$ is equal to 0 for $i \in \{1, \ldots, n\}$, then $x = 0$, so $\smallv = (1, 0, \ldots, 0)$. It follows then that $|\mathcal{V}_0| = 1$.

Now, we plug these values into the formulas for $k$ found in section  \ref{section:lowperformance}:
\begin{align}
    k = \left\lceil\frac{n\log q}{\log |\mathcal{V}|q}\right\rceil = \left\lceil\frac{n\log q}{\log q^2}\right\rceil = \left\lceil\frac{n}{2}\right\rceil = \left\lceil\frac{d+1}{2}\right\rceil
\end{align}
For $d$ odd, this is equal to $\frac{d}{2} + \frac{1}{2}$, as wanted.

Finally, we check the formulas for $k$ found in section \ref{section:highperformance}. Since $|\mathcal{V}| = q$, we check both formulas (\ref{eq:ksmallq}) and (\ref{eq:ksmallV}):
\begin{align}
    k_{q < |\mathcal{V}|} &= \left\lceil\frac{(n+1)\log q}{2\log(|\mathcal{V}|/|\mathcal{V}_0|)}\right\rceil= \left\lceil\frac{(n+1)\log q}{2\log q}\right\rceil = \left\lceil\frac{n+1}{2}\right\rceil = \left\lceil\frac{d}{2} + 1\right\rceil
\end{align}
\begin{align}
    k_{|\mathcal{V}| < q} &= \left\lceil\frac{\log(|\mathcal{V}|q^n)}{2\log(|\mathcal{V}|/|\mathcal{V}_0|)}\right\rceil = \left\lceil\frac{\log(q^{n+1})}{2\log q}\right\rceil = \left\lceil\frac{n+1}{2}\right\rceil = \left\lceil\frac{d}{2} + 1\right\rceil
\end{align}
For $d$ even, $k = \frac{d}{2} + 1$, as wanted. Therefore, our formulas for the number of queries $k$ match the previously determined values found by Childs et al. in \cite{univariate}.

\subsection{Multivariate polynomial interpolation}\label{section:multivariate}
The vector interpolation algorithm can also be used to interpolate multivariate polynomials $\\{f \in \mathbb{F}_q[x_1, \ldots, x_m]}$, as studied by Chen et al. in \cite{multivariate}. To see why this is the case, we will find the appropriate vector $\ket{s}$ and set $\mathcal{V}$. 

Let $d$ be the degree of the polynomial $f$. Set 
\begin{equation}
\mathcal{V} = \{(1, x_1, \ldots, x_m, x_1x_2, \ldots, x_1^{d-1}x_2, \ldots) : x_1, x_2, \ldots, x_m \in \mathbb{F}_q\}
\end{equation}
where each element of $\mathcal{V}$ is a vector containing all the monomials formed by $x_1, \ldots, x_m \in \mathbb{F}_q$ of degree at most $d$. For example, setting $m=2$ and $d=3$, the set $\mathcal{V}$ would be 
\begin{equation}
    \mathcal{V} = \{(1, x_1, x_2, x_1x_2, x_1^2, x_2^2, x_1^2x_2, x_2^2x_1, x_1^3, x_2^3) : x_1, x_2 \in \mathbb{F}_q\}
\end{equation}
We claim that $|\mathcal{V}| = q^{m}$. To see why this is the case, notice that each vector $\smallv \in \mathcal{V}$ is uniquely determined by $x_1, x_2, \ldots, x_m$, so $|\mathcal{V}| = |\{(x_1, \ldots, x_m) : x_1, \ldots, x_m \in \mathbb{F}_q\}| = q^m$. 

We further claim that $n = {m+d \choose d}$. This is the number of monomials of degree at most $d$ that can be formed by $m$ variables.

Finally, we compute the value of $|\mathcal{V}_0|$. We will find an explicit formula then bound it above and below. If a vector $\smallv \in \mathcal{V}$ has a 0 entry, then at least one of $x_1, \ldots, x_m$ is 0. Therefore to find $|\mathcal{V}_0|$, we sum over the number $i$ of zeroes in $(x_1, \ldots, x_m)$. For each $i$ we count the number of ways to distribute the zeroes among $(x_1, \ldots, x_m)$ and the number of possible values for the remaining non zero entries in $(x_1, \ldots, x_m)$, so
\begin{align}
    |\mathcal{V}_0| = \sum_{i=1}^m {m \choose i}q^{m - {m \choose i}} 
\end{align}
Instead of evaluating this sum explicitly, we will bound it above and below to find a range for $k$. We will then verify that the $k$ described in \cite{multivariate} falls within the calculated range. 
The smallest value of ${m \choose i}$ is $1$, so an upper bound for $|\mathcal{V}_0|$ is
\begin{align}\label{eq:v0size}
    |\mathcal{V}_0| &\leq q^{m-1}\sum_{i=1}^m {m \choose i} \\
    &= q^{m-1}(2^m - 1)
\end{align}
The largest value of ${m \choose i}$ is upper bounded by $m^m$, so a lower bound for $|\mathcal{V}_0|$ is
\begin{align}
    |\mathcal{V}_0| &\geq q^{m-m^m}\sum_{i=1}^m {m \choose i} \\
    &= q^{m-m^m}(2^m - 1)
\end{align}
In our case $q < |\mathcal{V}|$, so we use the formula for $k_{q < |\mathcal{V}|}$ in equation (\ref{eq:ksmallq}). Calculating the upper bound, we have
\begin{align}
    k &= \left\lceil\frac{(n+1)\log q}{2\log(|\mathcal{V}|/|\mathcal{V}_0|)}\right\rceil  \leq \left\lceil\frac{(n+1)\log q}{2\log(q^m/(q^{m-1}(2^m-1)))}\right\rceil\\
    &= \left\lceil\frac{(n+1)\log q}{2\log(q/(2^m -1))}\right\rceil \leq \left\lceil\frac{(n+1)q^m}{2}\right\rceil
\end{align}
Similarly, calculating the lower bound, we have
\begin{align}
    k &= \left\lceil\frac{(n+1)\log q}{2\log(|\mathcal{V}|/|\mathcal{V}_0|)}\right\rceil  \geq \left\lceil\frac{(n+1)\log q}{2\log(q^m/(q^{m-m^m}(2^m-1)))}\right\rceil\\
    &= \left\lceil\frac{(n+1)\log q}{2\log(q^{m^m}/(2^m -1))}\right\rceil \geq \left\lceil\frac{(n+1)\log q}{2\log q^{m^m}}\right\rceil\\
    &= \left\lceil\frac{n+1}{2m^m}\right\rceil
\end{align}
To summarize, we have the following very conservative bounds for $k$: 
\begin{equation}
    \left\lceil\frac{n+1}{2m^m}\right\rceil \leq k \leq \left\lceil\frac{(n+1)q^m}{2}\right\rceil
\end{equation}
The value of $k$ given in \cite{multivariate} is $\frac{dn}{m+d}$. Since $\frac{1}{2m^m} \leq \frac{d}{m+d}\leq \frac{q^m}{2}$, this value of $k$ is within the bounds, as wanted. To get a more precise value of $k$ from formula (\ref{eq:ksmallq}), we can directly evaluate equation (\ref{eq:v0size}). This will still produce an overestimate of $k$ since $\mathcal{V}_0$ was used to upper bound the sum in line (\ref{eq:V0sumestimate}). Ultimately, a direct evaluation of the sum in line (\ref{eq:V0sumestimate}) will produce a tight bound on the number of queries $k$ needed to interpolate a multivariate polynomial.


Notice that we can reduce a multivariate polynomial to a univariate polynomial by setting $x_i = x_1^{d^{i-1} + d^{i-2} + \ldots + 1}$.  For example, the multivariate polynomial of degree $d = 2$ with $m=3$ variables is
\begin{equation}
    f(x_1, x_2, x_3) = c_0 + c_1x_1 + c_2x_2 + c_3x_3 + c_4x_1x_2 + c_5x_1x_3 + c_6x_2x_3 + c_7x_1^2 + c_9x_2^2 + c_{10}x_3^2
\end{equation}
It becomes, under this transformation, 
\begin{equation}
    f(x_1, x_1^{3}, x_1^{7}) = c_0 + c_1x_1 + c_2x_1^3 + c_3x_1^7 + c_4x_1^4 + c_5x_1^8 + c_6x_1^{10} + c_7x_1^2 + c_9x_1^6 + c_{10}x_1^{14}
\end{equation}
Under this transformation a degree $d$ multivariate polynomial $f$ with $m$ variables becomes a degree $d^{m} + d^{m-1} + \ldots +d$ univariate polynomial. Using the univariate polynomial interpolation algorithm, we can determine the coefficients of $f$ with $k = \frac{d^{m} + d^{m-1} + \ldots +d}{2} + \frac{1}{2}$ queries. This is smaller than the classical number of queries ${m+d\choose d}$ but larger than the multivariate polynomial interpolation algorithm, where the number of queries needed is $\frac{d}{m+d}{m+d \choose d}$.

\section{Optimality}\label{section:optimality}
In this section, we will show that the algorithm presented in section \ref{subsection:algorithm} is optimal for vector interpolation. Specifically, this means that no other $k$-query algorithm will succeed in finding $\mathbf{s}$ with probability greater than $|R_k|/q^n$. We will use the following two lemmas. 

\begin{lemma}\cite{univariate} \label{lemma:childsoptimality}
    Suppose we are given a state $\ket{\phi_c}$ with $c \in C$ chosen uniformly at random from some set $C$. Then the probability of correctly determining $c$ with some orthogonal measurement is at most \normalfont $$\frac{\text{dim span}\{\ket{\phi_c} : c \in C\}}{|C|}$$ 
\end{lemma}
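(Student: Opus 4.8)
The plan is to recast the decision procedure in operator form and then run a short trace estimate. An orthogonal (projective) measurement together with a rule for guessing $c$ from the outcome is equivalent to a family of orthogonal projectors $\{P_c\}_{c \in C}$ onto mutually orthogonal subspaces (group the measurement outcomes by the value of $c$ they cause us to output), satisfying $\sum_{c \in C} P_c \preceq I$. Averaging over the uniform choice of $c$, the probability of guessing correctly is
$$p_{\mathrm{succ}} = \frac{1}{|C|}\sum_{c \in C} \bra{\phi_c} P_c \ket{\phi_c},$$
so it suffices to show $\sum_{c \in C} \bra{\phi_c} P_c \ket{\phi_c} \le d$, where $d = \dim \mathrm{span}\{\ket{\phi_c} : c \in C\}$.

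First I would introduce $\Pi$, the orthogonal projector onto $V := \mathrm{span}\{\ket{\phi_c} : c \in C\}$, so that $\ket{\phi_c} = \Pi \ket{\phi_c}$ for every $c$ and hence $\bra{\phi_c} P_c \ket{\phi_c} = \bra{\phi_c} \Pi P_c \Pi \ket{\phi_c}$. Since $\Pi P_c \Pi$ is positive semidefinite and $\ket{\phi_c}$ is a unit vector, the expectation $\bra{\phi_c} \Pi P_c \Pi \ket{\phi_c}$ is at most the largest eigenvalue of $\Pi P_c \Pi$, which in turn is at most its trace: $\bra{\phi_c} P_c \ket{\phi_c} \le \mathrm{tr}(\Pi P_c \Pi) = \mathrm{tr}(P_c \Pi)$, using $\Pi^2 = \Pi$ and cyclicity.

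Summing over $c$ and using linearity of the trace, $\sum_{c} \bra{\phi_c} P_c \ket{\phi_c} \le \mathrm{tr}\big((\sum_c P_c)\,\Pi\big) \le \mathrm{tr}(\Pi) = d$, where the last inequality is the fact that $0 \preceq M \preceq I$ and $\Pi$ a projector imply $\mathrm{tr}(M\Pi) \le \mathrm{tr}(\Pi)$ (equivalently $\mathrm{tr}((I-M)\Pi) \ge 0$, a product of two PSD operators). Dividing by $|C|$ yields $p_{\mathrm{succ}} \le d/|C|$, as claimed.

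I expect the only delicate point to be the bookkeeping in the opening step — making precise that an ``orthogonal measurement plus guessing rule'' collapses to a PSD-valued assignment $c \mapsto P_c$ with $\sum_c P_c \preceq I$ — after which the estimate is routine. It is worth noting in passing that the argument never uses projectivity beyond this reduction: it goes through verbatim for an arbitrary POVM $\{E_c\}$ with $\sum_c E_c = I$, so the bound in fact holds over all measurements; since the statement only asserts it for orthogonal measurements, the projective case is all that is needed here.
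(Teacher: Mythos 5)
Your proposal is correct and follows essentially the same route as the intended proof: the paper does not prove this lemma itself but cites Childs et al.\ \cite{univariate}, whose argument is exactly your trace estimate, namely that with $\Pi$ the projector onto $\mathrm{span}\{\ket{\phi_c} : c \in C\}$ one has $\frac{1}{|C|}\sum_{c}\bra{\phi_c}P_c\ket{\phi_c} \le \frac{1}{|C|}\operatorname{tr}\bigl(\bigl(\sum_c P_c\bigr)\Pi\bigr) \le \operatorname{tr}(\Pi)/|C|$. Your closing observation that the bound extends verbatim to general POVMs is also accurate, though not needed for the statement.
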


\begin{lemma}\label{lemma:optimality}
    Let $\ket{\phi_c}$ be the state of any quantum algorithm after $k$ queries, where the black box contains $c \in \mathbb{F}_q^n$. Then \normalfont $\text{dim span}\{\ket{\phi_c} : c \in C\} \leq |R_k|$.
\end{lemma}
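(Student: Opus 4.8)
The plan is to show that, however the $k$-query algorithm is organized, its final state $\ket{\phi_c}$ always lies in one fixed subspace — independent of $c$ — spanned by at most $|R_k|$ vectors indexed by the elements of $R_k$. Since $\text{span}\{\ket{\phi_c}:c\in C\}$ is then contained in that fixed subspace, the dimension bound is immediate. Concretely, I will establish that $\ket{\phi_c} = \sum_{\z\in R_k} e(c\cdot\z)\,\ket{\xi_\z}$ for suitable $c$-independent vectors $\ket{\xi_\z}$, from which $\dim\text{span}\{\ket{\phi_c}:c\in C\}\le |R_k|$ follows.

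First I would make two standard reductions. By the principle of deferred measurement I may assume the algorithm performs no intermediate measurements, so after $k$ queries its state is pure and of the form $\ket{\phi_c}=U_k\,O_c\,U_{k-1}\cdots U_1\,O_c\,U_0\ket{\mathrm{init}}$, where $U_0,\dots,U_k$ are $c$-independent unitaries acting on the full query-plus-workspace register and $O_c$ is a query. Moreover, as shown in the phase-query discussion above, the standard oracle $\ket{\smallv,y}\mapsto\ket{\smallv,y+\OO(\smallv)}$ is conjugate, via a $c$-independent Fourier transform on the answer register, to the phase oracle $O_c:\ket{\smallv,y}\mapsto e\!\left(y(c\cdot\smallv)\right)\ket{\smallv,y}$; absorbing those Fourier transforms into the $U_j$, I may assume every query is a phase query. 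Note this framework already subsumes adaptive algorithms (adaptivity is carried by the $U_j$) and is agnostic to whether one physical query register is reused $k$ times or $k$ registers are queried in parallel.

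The heart of the argument is an induction on the number of queries made so far. Writing $\ket{\Psi_0}=U_0\ket{\mathrm{init}}$ and $\ket{\Psi_j}=U_j\,O_c\,\ket{\Psi_{j-1}}$, I claim $\ket{\Psi_j}=\sum_{\z} e(c\cdot\z)\,\ket{\xi^{(j)}_\z}$, where $\z$ ranges over all sums of at most $j$ terms of the form $y\smallv$ with $\smallv\in\mathcal{V}$, $y\in\mathbb{F}_q$ (in particular all such $\z$ lie in $R_k$, by padding with zero coefficients to get an element $Z(\V,\y)$), and the $\ket{\xi^{(j)}_\z}$ do not depend on $c$. The base case is trivial since $\ket{\Psi_0}$ is $c$-independent. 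For the step, expand each $\ket{\xi^{(j-1)}_\z}$ in the query-register basis $\{\ket{\smallv,y}\}$; applying $O_c$ multiplies the $(\smallv,y)$-component by $e(y(c\cdot\smallv))$, turning the phase $e(c\cdot\z)$ into $e\!\left(c\cdot(\z+y\smallv)\right)$, and $\z+y\smallv$ is again a sum of at most $j$ terms of the required shape. Grouping terms by the value $\z'=\z+y\smallv$ and applying the $c$-independent unitary $U_j$ gives $\ket{\Psi_j}$ in the claimed form. Taking $j=k$ yields $\ket{\phi_c}=\sum_{\z\in R_k} e(c\cdot\z)\,\ket{\xi^{(k)}_\z}$, so every $\ket{\phi_c}$ lies in $\text{span}\{\ket{\xi^{(k)}_\z}:\z\in R_k\}$, whose dimension is at most $|R_k|$; combined with Lemma \ref{lemma:childsoptimality} this bounds the success probability of any $k$-query algorithm by $|R_k|/q^n$.

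I expect the main obstacle to be pinning down the claim that this single template — alternating fixed unitaries with phase queries, starting from a fixed state — genuinely captures \emph{every} $k$-query quantum algorithm: one must argue carefully that intermediate measurements can be deferred, that the standard and phase oracles are interchangeable up to $c$-independent unitaries, and that arbitrarily large workspaces and arbitrary adaptive choices are all absorbed into the $U_j$. The remaining technical content is the observation that the accumulated phase vector never escapes $R_k$, which rests only on the definition $R_k=\{\sum_{i=1}^k y_i\smallv_i:(\V,\y)\in\mathcal{V}^k\times\mathbb{F}_q^k\}$ and on $\mathcal{V}\neq\varnothing$; once that is in place, the dimension count is routine.
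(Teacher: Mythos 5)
Your proposal is correct and follows essentially the same route as the paper: both arguments reduce an arbitrary $k$-query algorithm to alternating $c$-independent unitaries with phase queries, observe that the accumulated phase after $k$ queries is $e\!\left(c\cdot Z(\V,\y)\right)$, and group terms by $\z = Z(\V,\y)$ to write $\ket{\phi_c}=\sum_{\z\in R_k} e(c\cdot\z)\ket{\psi_\z}$ with $c$-independent $\ket{\psi_\z}$, giving the dimension bound. The only difference is bookkeeping: you run an induction over queries (with the zero-padding remark to keep partial sums inside $R_k$), while the paper expands the whole product at once via resolutions of the identity — the underlying decomposition is identical.
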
 

Lemma \ref{lemma:childsoptimality} was proven by Childs et al. \cite{univariate}, and we provide a proof of lemma \ref{lemma:optimality} below. 

\begin{proof}[Proof of Lemma \ref{lemma:optimality}.]
Consider the state space $\ket{\smallv,y,\boldsymbol{\xi}}$, with $\ket{\smallv} \in \mathbb{F}_q^n$, and $\ket{y}\in \mathbb{F}_q$, and $\ket{\boldsymbol{\xi}}$ a workspace register of arbitrary size indexed by some index set $I$. Let $A$ be an arbitrary $k$-query algorithm $A = U_{k}Q_cU_{k-1}Q_c\ldots U_1Q_cU_0$ acting on the state space $\ket{\smallv,y,\boldsymbol{\xi}}$, where each $U_i$ is a unitary transformation and $Q_c$ is a phase query acting on the first two registers as follows:
\begin{equation}
    Q_c\ket{\smallv, y, \boldsymbol{\xi}} \mapsto e(y\OO(\smallv))\ket{\smallv,y,\boldsymbol{\xi}}
\end{equation}
We start with the initial state $\ket{\smallv_0, y_0, \boldsymbol{\xi}} = \ket{\mathbf{0}, 0, \mathbf{0}}$, then act with $A$ as follows:
\begingroup
\allowdisplaybreaks
\begin{align}
    \ket{\phi}_c &= A\ket{\smallv_0, y_0, \boldsymbol{\xi}_0} \\
    &= U_{k}Q_cU_{k-1}Q_c\ldots U_1Q_cU_0\ket{\smallv_0, y_0, \boldsymbol{\xi}_0} \\
    &= U_{k}Q_cU_{k-1}Q_c\ldots U_1Q_c\left(\sum_{\substack{\smallv_1 \in \mathbb{F}_q^n \\ y_1\in \mathbb{F}_q \\ \boldsymbol{\xi}_1 \in I}}\ket{\smallv_1, y_1, \boldsymbol{\xi}_1}\bra{\smallv_1, y_1, \boldsymbol{\xi}_1}\right)U_0\ket{\smallv_0, y_0, \boldsymbol{\xi}_0} \\
    &= \sum_{\substack{\smallv_1 \in \mathbb{F}_q^n \\y_1\in \mathbb{F}_q\\ \boldsymbol{\xi}_1 \in I}} e(y_1\OO(\smallv_1))\;\;U_{k}Q_cU_{k-1}Q_c\ldots U_1\ket{\smallv_1, y_1, \boldsymbol{\xi}_1}\bra{\smallv_1, y_1, \boldsymbol{\xi}_1}U_0\ket{\smallv_0, y_0, \boldsymbol{\xi}_0}\\
    &\;\;\vdots \notag \\
    &= \sum_{\substack{\smallv_{k+1} \in \mathbb{F}_q^n \\y_{k+1}\in \mathbb{F}_q \\ \boldsymbol{\xi} \in I^{k+1}}}\sum_{\substack{\V \in (\mathbb{F}_q^{n})^k \\ \y \in \mathbb{F}_q^k}}e\left(\sum_{i = 1}^ky_i\OO(\smallv_i)\right)\left(\prod_{i =0}^k \bra{\smallv_{i+1}, y_{i+1}, \boldsymbol{\xi}_{i+1}}U_i\ket{\smallv_i, y_i, \boldsymbol{\xi}_i}\right)\ket{\smallv_{k+1}, y_{k+1}, \boldsymbol{\xi}_{k+1}} \\
    &= \sum_{\z\in \mathbb{F}_q^n} e(\mathbf{s} \cdot \z) \sum_{(\V, \y) \in Z^{-1}(\z)} \sum_{\substack{\smallv_{k+1} \in \mathbb{F}_q^n \\y_{k+1}\in \mathbb{F}_q \\ \boldsymbol{\xi}\in I^{k+1}}}\left(\prod_{i =0}^k \bra{\smallv_{i+1}, y_{i+1}, \boldsymbol{\xi}_{i+1}}U_i\ket{\smallv_i, y_i, \boldsymbol{\xi}_i}\right)\ket{\smallv_{k+1}, y_{k+1}, \boldsymbol{\xi}_{k+1}} \label{eq:finaloptimality}
\end{align}
\endgroup
Setting 
\begin{equation}
    \ket{\psi_{\z}} = \sum_{(\V, \y) \in Z^{-1}(\z)} \sum_{\substack{\smallv_{k+1} \in \mathbb{F}_q^n \\y_{k+1}\in \mathbb{F}_q \\ \boldsymbol{\xi} \in I^{k+1}}}\left(\prod_{i =0}^k \bra{\smallv_{i+1}, y_{i+1}, \boldsymbol{\xi}_{i+1}}U_i\ket{\smallv_i, y_i, \boldsymbol{\xi}_i}\right)\ket{\smallv_{k+1}, y_{k+1}, \boldsymbol{\xi}_{k+1}} 
\end{equation}
we can rewrite the final state in line (\ref{eq:finaloptimality}) as
\begin{equation}\label{eq:finalstateoptimality}
    \ket{\phi_c} = \sum_{\z\in \mathbb{F}_q^n} e(\mathbf{s} \cdot \z)  \ket{\psi_{\z}}
\end{equation}
The span of $\{\ket{\psi_{\z}} : \z \in R_k\}$ has dimension at most $|R_k|$. From equation (\ref{eq:finalstateoptimality}) we see that the final state of the system after $A$ is an element of span$\{\ket{\psi_{\z}} : \z \in R_k\}$, so we get, as wanted,
\begin{equation}
    \text{dim span}\{\ket{\phi_c} : c \in \mathbb{F}_q^n\} \leq \text{dim span}\{\ket{\psi_{\z}} : \z \in R_k\} \leq |R_k|
\end{equation}
\end{proof}
By combining lemma \ref{lemma:childsoptimality} and lemma \ref{lemma:optimality}, we see that the probability of correctly determining $\mathbf{s}$ chosen uniformly at random with some orthogonal measurement is at most 
\begin{equation}
    \frac{\text{dim span}\{\ket{\phi_{\mathbf{s}}} : \mathbf{s} \in \mathbb{F}_q^n\}}{|\mathbb{F}_q^n| } \leq \frac{|R_k|}{q^n}
\end{equation}
Determining $\mathbf{s}$ from a uniform distribution over $\mathbb{F}_q^n$ is the average case scenario, so any $k$-query algorithm will always determine $\mathbf{s}$ with a probability of at most $|R_k|/q^n$ (the worst case scenario will succeed with lower probability than the average case). Therefore, the quantum interpolation algorithm described in this paper is optimal for vector interpolation. 

\section{Conclusion and open problems}
In this paper we examined the quantum polynomial interpolation algorithm and broadened its scope to interpolate certain types of vector inner product functions. We found a conservative bound on the query complexity. We conclude by discussing avenues for future work. 

The query complexity computed in section \ref{section:highperformance} is an overestimate in most cases. Is it possible to find a tighter bound on the number of queries $k$? Our formula for $k$ required us to evaluate large sums in the multivariate polynomial case. Is it possible to find an overestimate that will simplify calculations?

We evaluated the query complexity of the algorithm, but an opportunity for future research would be to compute the gate complexity of the interpolation algorithm. This was already calculated in the univariate polynomial case by Childs et al. \cite{univariate}, so we could generalize their study to the vector interpolation problem. 

Finally, our interpolation algorithm is a special case of the coset identification problem. A closer study of this larger framework could help identify other functions for which the interpolation algorithm is optimal.

\section*{Acknowledgements}
We thank Dan Boneh and Patrick Hayden for their guidance and helpful discussions.


\begin{thebibliography}{20}
\bibitem{qataxonomy} 
Sukhpal Singh Gill, Adarsh Kumar, Harvinder Singh, Manmeet Singh, Kamalpreet Kaur, Muhammad Usman, and Rajkumar Buyya, \textit{Quantum Computing: A Taxonomy, Systematic Review and
Future Directions}, 2021, available at \href{https://arxiv.org/abs/2010.15559}{arXiv:2010.15559}.

\bibitem{cryptotaxonomy} 
Anders Stigsson, \textit{A Taxonomy of Quantum Algorithms: The core ideas of existing quantum algorithms and their implications on cryptography}, 2018, available at \href{https://publications.lib.chalmers.se/records/fulltext/255950/255950.pdf}{https://hdl.handle.net/20.500.12380/255950}. 

\bibitem{pgm}
Paul Hausladen and William K. Wootters, \textit{A ‘Pretty Good’ Measurement for Distinguishing Quantum States}, Journal of Modern Optics, Vol. 41, no. 12, 1994, pp. 2385-2390, available at \href{https://doi.org/10.1080/09500349414552221}{https://doi.org/10.1080/09500349414552221}.

\bibitem{hsp}
Dave Bacon, Andrew M. Childs, and Wim van Dam, \textit{From optimal measurement to efficient quantum algorithms for the hidden subgroup problem over semidirect product groups,} Proceedings of the 46th IEEE Symposium on Foundations of Computer Science (FOCS 2005), 2005, pp. 469-468, available at \href{https://arxiv.org/abs/quant-ph/0504083}{arXiv:quant-ph/0504083}.

\bibitem{zhandry}
Mark Zhandry, \textit{Quantum oracle classification - the case of group structure}, 2015, available at \href{https://arxiv.org/abs/1510.08352}{arXiv:1510.08352}.

\bibitem{matrixchilds}
Andrew M. Childs, Shih-Han Hung, Tongyang Li. \textit{Quantum query complexity with matrix-vector products}, 2021, available at \href{https://arxiv.org/abs/2102.11349}{arXiv:2102.11349}.

\bibitem{symmetricoracle}
Daniel Copeland and Jamie Pommersheim, \textit{Quantum query complexity of symmetric oracle problems}, Quantum 5, 403, 2021, available at \href{https://arxiv.org/abs/1812.09428}{arXiv:1812.09428}.

\bibitem{boolean}
Wim van Dam, \textit{Quantum oracle interrogation: getting all information for almost half the price}, Proceedings of the 39th Annual IEEE Symposium on Foundations of Computer Science (FOCS), 1998, pp. 362-367, available at \href{https://arxiv.org/abs/quant-ph/9805006}{arXiv:quant-ph/9805006}.

\bibitem{univariate} 
Andrew M. Childs, Wim van Dam, Shih-Han Hung, and Igor E. Shparlinski, \textit{Optimal Quantum Algorithm for Polynomial Interpolation}, 43rd International Colloquium on Automata, Languages, and Programming (ICALP 2016), 2016, pp. 16:1-16:13, available at \href{https://arxiv.org/abs/1509.09271}{arXiv:1509.09271}.

\bibitem{multivariate} 
Jianxin Chen, Andrew M. Childs, and Shih-Han Hung, \textit{Quantum Algorithm for Multivariate Polynomial Interpolation}, Royal Society A 474: 20170480, 2017, available at \href{https://arxiv.org/abs/1701.03990}{arXiv:1701.03990}.

\bibitem{nielsen}
Michael A. Nielsen and Isaac L. Chuang, \textit{Quantum Computation and Quantum Information}, $10^{\text{th}}$ ed., Cambridge University Press, 2011. 

\end{thebibliography}
\end{document}